%
%
%
%
%
%
%
\documentclass[%
 reprint,
 amsmath,amssymb,
 aps,
]{revtex4-2}

\usepackage{graphicx}
\usepackage{dcolumn}
\usepackage{bm}
\usepackage{ulem}
\usepackage{xcolor}
\usepackage{amsmath}
\usepackage{amsthm}

\def\sideremark#1{\ifvmode\leavevmode\fi\vadjust{\vbox to0pt{\vss
 \hbox to 0pt{\hskip\hsize\hskip1em
 \vbox{\hsize2cm\tiny\raggedright\pretolerance10000
  \noindent #1\hfill}\hss}\vbox to8pt{\vfil}\vss}}}

\newcommand{\II}{{\rm  I\hspace{-.2mm}I}}

\newcommand\eqSig{\mathrel{\stackrel{\mathcal{H}\hh}{=}} }
\newcommand{\hh}{{\hspace{.3mm}}}
\newcommand{\Span}[1]{\langle #1 \rangle}
\newcommand{\ext}{{\rm{d}}}

\newtheorem{theorem}{Theorem}
\newtheorem{definition}{Definition}
\newtheorem{proposition}{Proposition}
\newtheorem{corollary}{Corollary}
\theoremstyle{remark}
\newtheorem*{remark}{Remark}
\newtheorem*{slogan}{Slogan}


\def\beq#1\eeq{\begin{align}#1\end{align}}

\begin{document}

\preprint{APS/123-QED}


\title{Horizons that Gyre and Gimble:\\ A Differential Characterization of Null Hypersurfaces}

\author{Samuel Blitz}
\affiliation{
 Department of Mathematics and Statistics, Masaryk University\\ Building 08, Kotlářská 2,
Brno, CZ 61137 \\
blitz@math.muni.cz
 }%

\author{David McNutt}
\affiliation{ Department of Mathematics and Statistics, UiT: The Arctic University of Norway, Tromsø, Norway, 9019 \\ 
david.d.mcnutt@uit.no
}%

\date{\today}

\begin{abstract}

Motivated by the thermodynamics of black hole solutions conformal to stationary solutions, we study the geometric invariant theory of null hypersurfaces. 
It is well-known that a null hypersurface in a Lorentzian manifold can be treated as a Carrollian geometry. Additional structure can be added to this geometry by choosing a connection which yields a Carrollian manifold. In the literature various authors have introduced Koszul connections to study the study the physics on these hypersurfaces. In this paper we examine the various Carrollian geometries and their relationship to null hypersurface embeddings. We specify the geometric data required to construct a rigid Carrollian geometry, and we argue that a connection with torsion is the most natural object to study Carrollian manifolds. We then use this connection to develop a hypersurface calculus suitable for a study of intrinsic and extrinsic differential invariants on embedded null hypersurfaces; motivating examples are given, including geometric invariants preserved under conformal transformations.

\end{abstract}

\maketitle


\section{\label{sec:intro}Introduction\protect}

\bigskip

Black holes have captured the imagination of both the relativist and the lay public for decades. In the last 10 years, observational data of in-spiraling black holes  from the LIGO and Virgo collaborations~\cite{GWobservation} have sparked greater interest in dynamical black holes with complicated event horizon structures. A particularly useful formalism in a relativists' toolbox in this respect is black hole thermodynamics. To describe black holes with respect to thermodynamic variables, we must locate the event horizon or some other meaningful hypersurface which acts as the boundary for the black hole, and then characterize it geometrically. For example, the black hole entropy is associated with the area of the horizon.

When considering the simplest cases (such as static and stationary solutions), there is broad agreement that the necessary horizon is a Killing horizon of some timelike Killing vector~\cite{wald2010general}, and there is a wealth of methods to discuss black hole thermodynamics in this regime. In the dynamical case, the question of which hypersurface is the appropriate boundary and what its thermodynamical properties might be is an open problem \cite{booth2005black}. While some of the candidate hypersurfaces, such as dynamical horizons, are spacelike \cite{booth2005black}, there are convincing arguments that this hypersurface should be null \cite{wall2012proof}.

In particular, there is a special class of dynamical black hole solutions which are conformally related to stationary black hole solutions, and hence admit a conformal Killing vector field. Within this class of black hole solutions, there is an obvious candidate for a horizon alternative, namely the conformal Killing horizon which is a null hypersurface \cite{sultana2004conformal}. Due to the properties of a conformal mapping, there is a procedure to relate the properties of stationary spacetimes to such dynamical spacetimes. For these black hole solutions, it then should be possible to determine the thermodynamic properties of the stationary solution and relate them to the dynamical solution. From a physical perspective, it is expected that the outcomes of experiments in both solutions are equivalently mapped to each other so long as the effects of the conformal transformation on the coupling of the geometrical spacetime and the matter degrees of freedom within the experimental apparatus are taken into account \cite{dicke1962mach, faraoni2007pseudo, codello2013renormalization}.

However, there is a difficulty with this proposal: it is not clear when a conformal transformation of a black hole solution yields a new black hole solution. Taking the simplest static solution, Schwarzschild, and a time dependent conformal factor, it is possible to construct new dynamical black hole solutions, cosmological solutions, or more exotic spacetimes \cite{mello2017}. Alternatively, with a spatial conformal factor, it is possible to transform any spherically symmetric black hole solution into a Kundt solution, which will not describe a black hole solution \cite{pravda2017exact}. Due to this nuance in the choice of conformal factor, and motivated by the utility of conformal Killing horizons in the study of dynamical black hole solutions, we will classify the intrinsic and extrinsic invariants of embedded null hypersurfaces with the future aim of employing these results to characterize dynamical black hole solutions admitting conformal Killing horizons.

Introduced in 1965 by Levy-Leblond~\cite{LvyLeblond1965}, Carrollian geometries have recently~\cite{Penna2018,Donnay_2019,Ciambelli2019,Chandrasekaran2022,Petkou2022} been investigated as an avenue of describing and characterizing event horizons of many kinds and null hypersurfaces (such as null infinities~\cite{Duval2014,Hartong2015,Herfray2020,Herfray2022,Prabhu2022,Bagchi2022,Liu2023}), more generally. Carrollian geometries have also been studied in other, more intrinsic, settings, such as ultra-relativistic fluid dynamics~\cite{Ciambelli2018,Ciambelli2018-2} and Carrollian field theories~\cite{duval2014carroll, Bagchi2019,Bagchi2020,Bagchi2021,Banerjee2021,baiguera2023conformal,rivera2022revisiting,henneaux2021carroll}. In this article, we offer a geometric picture of Carrollian manifolds that harmonizes the interesting physics intrinsic to a Carrollian manifold
with the extrinsic geometric invariants necessary to classify a certain family of null hypersurface embeddings, which largely intersects with conformal-to-stationary black hole spacetimes.

\subsection{Notation}
Here we provide a brief summary of the notations that we will use throughout this article. The manifold $M$ represents a $d$-dimensional spacetime which is endowed with a (mostly negative) Lorentzian signature metric $g$.

Given such a metric, its associated Levi-Civita connection $\nabla$ has curvature given by
$$R(x,y)z = (\nabla_x \nabla_y - \nabla_y \nabla_x)z - \nabla_{[x,y]} z\,,$$
where $x,y,z$ are smooth vector fields on $M$ and $[\cdot,\cdot]$ denotes the Lie bracket of vector fields. When coordinates are useful, we will use  indices from the Greek alphabet (such as $\alpha$, $\beta$, and $\gamma$ for spacetime coordinate indices).

To keep track of tensor structure, we will also often use Penrose's abstract index notation~\cite{Penrose1968}. These abstract indices will be denoted by Latin letters at the beginning of the alphabet (such as $a$, $b$, $c$, etc.). As an example, the above formula for the Riemann curvature tensor is given, in abstract indices, by
$$x^a y^b R_{ab}{}^c{}_d z^d = x^a y^b (\nabla_a \nabla_b - \nabla_b \nabla_a)z^c\,.$$
This abstract index notation will also be used to keep track of symmetry properties of a tensor: for example, we will use round brackets $(\cdot)$ to represent the symmetrization of a tensor, i.e. $T_{(ab)} = \tfrac{1}{2}(T_{ab} + T_{ba})$. We denote by the symbol $\odot$ the operator that maps a tensor to its symmetric part. Similarly, square brackets $[\cdot]$ will be used to represent the antisymmetrization of a tensor, so that $T_{[ab]} = \tfrac{1}{2} (T_{ab} - T_{ba})$. When a metric $g$ is present, we will occasionally use the notation $\langle u,v\rangle_{g}$ to mean $g(v,w)$, for some vectors $v,w$.

An embedded null hypersurface, denoted by $\mathcal{H}$, will always be assumed to be smooth, with dimension $n=d-1$. By abuse of notation, we will use $\mathcal{H}$ to refer both to an embedded null hypersurface in $M$ and as an $n$-dimensional manifold in its own right. When necessary to avoid overloading of symbols, we will use overbars $\bar{\bullet}$ to denote objects that belong to the hypersurface. For example, for a 1-form $\alpha$ on $M$, we might write $\bar{\alpha}$ for its pullback to $\mathcal{H}$. To eliminate confusion that might arise when considering null frames, we will use Latin letters to denote basis vectors, and their corresponding Greek letters to denote the action of the metric on those basis vectors, i.e. $\nu_a = g_{ab} n^b$.

When necessary, coordinate indices in $M$ will use letters from the beginning of the Greek alphabet and coordinate indices in $\mathcal{H}$ will use letters from the middle of the Greek alphabet, such as $\mu$, $\nu$, and $\rho$. We will label spacelike coordinates with capital letters from the beginning of the Latin alphabet, such as $A$, $B$, and $C$. Furthermore, a frame will sometimes be constructed on $\mathcal{H}$, and the indices used to label the spacelike vectors or covectors will be letters from the middle of the Latin alphabet, like $i$, $j$, and $k$. The indices for the whole frame on $\mathcal{H}$ will be capital letters from the middle of the Latin alphabet, like $I$, $J$, and $K$.

 
\section{Differential Structure of Carrollian Geometries}

In the literature broadly, Carrollian geometries generically refer to smooth manifolds with a minimally degenerate metric. Depending on the source, they might also include additional geometric structures such as a distinguished generating vector field, a rigging vector~\cite{mars1993} (or equivalently an Ehresmann connection~\cite{freidel2022}), and/or an affine connection~\cite{Ciambelli2018-2,Ciambelli2019,Chandrasekaran2022}. To be explicit, we provide a definition of the least rigid such structure below, drawing from definitions provided by~\cite{penrose1972} and~\cite{bekaertmorand2018}.
\begin{definition}
    Let $\mathcal{H}^n$ be a smooth manifold equipped with a rank-$(n-1)$ negative semi-definite symmetric bilinear form $\bar{g}$. Then the doublet $(\mathcal{H},\bar{g})$ is called a \textit{pre-Carrollian structure}. When $\iota : \mathcal{H}^n \hookrightarrow (M,g)$ is a smooth null hypersurface embedding into a Lorentzian spacetime and $\bar{g} = \iota^* g$, we say that $(\mathcal{H},\bar{g})$ is a \textit{null hypersurface structure} (NHS).
\end{definition}
The latter definition of an NHS was introduced in \cite{nurowski2000}. It is important to note that not all pre-Carrollian structures are null hypersurface structures; however, they can always be identified locally.  As we are only interested in local geometry, for the remainder of this paper, we will implicitly assume all pre-Carrollian structures can be realized as NHSs. In the context of black hole solutions, pre-Carrollian structures arise as non-expanding horizons (NEHs), which are a precursor to weakly isolated horizons \cite{ashtekar2002}. These are null hypersurfaces where there is no preferred choice of null generator and the pullback of the spacetime metric gives a degenerate metric on the NEH.

While these hypersurface embeddings describe very generic spacetimes, often spacetimes of physical interest have more geometric structure that can be exploited. To that end, we are motivated to add additional geometric data to our discussion of pre-Carrollian structures. Indeed, by specifying a vector that spans the radical of $\bar{g}$, we obtain~\cite{bekaertmorand2018}:
\begin{definition}
    Let $(\mathcal{H}^n,\bar{g})$ be a pre-Carrollian structure and let a vector field $\bar{\ell} \in \Gamma(T \mathcal{H})$ span the radical of $\bar{g}$, so that $\bar{g}(\bar{\ell},\cdot) = 0$. Then the triplet $(\mathcal{H},\bar{g},\bar{\ell})$ is  called a \textit{Carrollian structure}, and the vector field $\bar{\ell}$ is called its \textit{fundamental vector field}.
\end{definition}

\noindent Carrollian structures are also known as weak Carrollian structures in the literature \cite{duval2014conformal} and appear in the context of Carrollian field theories in the hydrodynamic regime \cite{armas2023carrollian,de2022carroll}. Any black hole solution admitting a weakly isolated horizon may be considered as a weak Carrollian geometry.

While for a given non-expanding horizon, one can assign any choice of $\bar{\ell}$ as the fundamental vector field, there is, in general, no natural choice. However, as noted above, for weakly isolated horizons~\cite{ashtekar2002} (and, in particular, Killing horizons), there exists a canonical choice of $\bar{\ell}$ for $\mathcal{H}$ (up to constant rescaling, which plays no role in this article). Thus, Carrollian structures are the natural geometric structures to examine when studying such spacetimes.

\medskip

One of the most efficient ways of describing the intrinsic differential structure of a smooth manifold is to assign that manifold a geometrically determined linear affine connection (sometimes called a \textit{Koszul connection}) so that one may study tensorial quantities built from curvatures. For a given Carrollian structure, there are infinitely many such connections that one may assign; however, the space of connections should not be viewed as arbitrary if one still wishes to respect the geometric data provided by the Carrollian structure. Broadly, this notion is captured by viewing manifolds with geometric data as G-structures over that manifold. To illustrate how we may pick such a family of connections, we consider an example pulled from~\cite{figueroa2020}: the Lorentzian spacetime $(M^d,g)$. 

Such a Lorentzian spacetime can be viewed as an $O(d-1,1)$-structure over $M$. One of the properties of a G-structure is that it gives rise to one or more characteristic tensor fields that are left invariant by the structure group. In this case, that characteristic tensor field is precisely the metric tensor. Furthermore, a connection adapted to a given G-structure is any connection for which the characteristic tensor fields are parallel: in this case, any connection satisfying $\nabla g = 0$. In general, such connections have torsion. However, one may show that for two distinct adapted connections, their torsions are related by the Spencer differential acting on the contorsion tensor. Consequently, the cokernel of this Spencer differential characterizes a choice-independent \textit{intrinsic torsion} of a given G-structure. For the case of a Lorentzian spacetime, both the kernel and cokernel of the Spencer differential vanish, implying that there is a unique connection with vanishing torsion tensor. That is, there is a unique metric-compatible torsion-free connection: the Levi-Civita connection.

Furthermore, this argument may be utilized to determine a preferred connection on a (say, spacelike) hypersurface embedded in a Lorentzian manifold. We begin by pulling back the metric to the hypersurface. Treating this tensor as a characteristic tensor field, we thus find a new structure group on the hypersurface: $O(d-1)$. Then, applying the same procedure as in the Lorentzian case, we may construct a unique connection, which is again the Levi-Civita connection, this time for the induced metric.

Clearly, the above construction is quite natural for inducing geometric structure on a hypersurface, and so we may as well apply it to those null hypersurface embeddings $\iota :\mathcal{H}^n \hookrightarrow (M^d,g)$ that also pick out a canonical choice of $\bar{\ell}$. As we are demanding that our geometry come equipped with a preferred vector field $\bar{\ell}$ along $\mathcal{H}$, the structure group on $\iota(\mathcal{H})$ is precisely the subgroup of $O(d-1,1)$ that leaves $\iota_* \bar{\ell}$ invariant. This structure group on $\iota(\mathcal{H})$ then has two characteristic tensor fields: $\iota_* \bar{\ell}$ and $g|_{\mathcal{H}}$. Pulling these tensor fields back to $\mathcal{H}$ leaves us with two characteristic tensor fields on $\mathcal{H}$, $\bar{g}$ and $\bar{\ell}$, with structure group given by the Carroll group $O(n-1) \ltimes \mathbb{R}^{n-1}$---this will be demonstrated in Section~\ref{sec:Cstructures}. Thus, guided by the case of spacelike hypersurfaces, we look for a connection $\bar{\nabla}$ that is adapted to this structure group. Specifically, such a connection must satisfy $\bar{\nabla} \bar{g} = 0 = \bar{\nabla} \bar{\ell}$. However, as noted by~\cite{figueroa2020}, the intrinsic torsion for a Carrollian structure is nonvanishing unless $\mathcal{L}_{\bar{\ell}} \bar{g} = 0$; furthermore, the kernel of the Spencer differential is non-vanishing. We thus have the following slogan:
\begin{slogan}
A natural Koszul connection on a Carrollian structure is metric, is compatible with the fundamental vector field, and may have torsion.
\end{slogan}

In other contexts, some have found it useful to work with Koszul connections that are non-metric but torsion-free. For example, Mars~\cite{mars2013,mars2020} has extensively studied a connection for null hypersurfaces which shift $\mathcal{L}_{\bar{\ell}} \bar{g}$ into the non-metricity of the connection rather than leaving it in torsion. Others~\cite{Duval2014, Donnay_2019, Chandrasekaran2022} have found it useful to construct other Koszul connections with vanishing torsion (which we take to mean the antisymmetric piece of the connection coefficients in a coordinate basis, see Equation~(\ref{torsion})). In \cite{Ciambelli2018}, they fold the intrinsic torsion into the frame basis rather than the connection coefficients. This gives rise to additional tensorial objects on the Carrollian manifold in order to capture the information lost by using a symmetric connection. However, as argued above, the ``naturality'' of torsion-free connections is put into question~\cite{Henneaux1979}, whereas the connections we use arise in the same way that the induced Levi-Civita connection arises for a non-null embedded hypersurface.

Note that such a choice is also justified from a physical perspective. It is well-known that the Palatini formalism, applied to the Einstein-Hilbert action, reproduces the metricity condition of the Levi-Civita connection on a Lorentzian (or Riemannian) manifold. Applying the same procedure to the Carrollian limit of the Einstein-Hilbert action~\cite{musaeus2023}, one finds precisely those connections satisfying
$$\bar{\nabla} \bar{g} = 0 = \bar{\nabla} \bar{\ell}$$
with torsion vanishing torsion only when $\mathcal{L}_{\bar{\ell}} \bar{g} = 0$. Consequently, if one wished to describe the physics \textit{intrinsic} to a null hypersurface (such as fermions in a Carrollian structure, see for example~\cite{bergshoeff2023}) with non-trivial geometry, one should use a connection that arises geometrically from the manifold itself. As such, the family of Koszul connections we consider are precisely those connections that appear naturally in both the intrinsic physics of a Carrollian manifold and the extrinsic geometry of an embedded null hypersurface. In fact, we speculate that such a connection is essential for holography on a null hypersurface.

\medskip

Looming over the above discussion, however, is that we are not granted a unique Koszul connection that satisfies the conditions required. In the next section, we will go into more detail of the structure group of a Carrollian structure, as well as what is required to pick out a preferred Koszul connection from the family of adapted connections. For now, it is useful to assume such a preferred connection is given and provide one more definition.
\begin{definition}
    Let $(\mathcal{H},\bar{g},\bar{\ell})$ be a Carrollian structure equipped with a Koszul connection $\bar{\nabla}$ such that $\bar{\nabla} \bar{g} = 0$ and $\bar{\nabla} \bar{\ell} = 0$. Then, the quadruplet $(\mathcal{H},\bar{g},\bar{\ell},\bar{\nabla})$ is a \textit{Carrollian manifold} and the connection $\bar{\nabla}$ is termed a  Carrollian connection.
\end{definition}

\noindent We note that elsewhere in the literature, Carrollian structures equipped with an affine connection are sometimes called strong Carrollian geometries.

Again, as for Carrollian structures, even a non-expanding horizon could be assigned arbitrarily the geometric data necessary to describe a Carrollian manifold. However, these are non-canonical choices. As we will see in Section~\ref{sec:induced-cmanifolds}, when a natural choice of spatial submanifold exists in $\mathcal{H}$, there is a canonical choice of Carrollian connection. In \cite{ashtekar2002}, these preferred spatial submanifolds were called ``good cuts,'' and can always be found for non-extremal weakly isolated horizons.

\medskip

Via the above definitions, it is clear that a null hypersurface embedding into a Lorentzian spacetime $\iota: \mathcal{H}^n \hookrightarrow (M^d,g)$ can be realized by a unique pre-Carrollian structure, but does not uniquely determine any stronger Carrollian geometry. Consider, for example, the simplest non-trivial null hypersurface: the event horizon of a Schwarzschild black hole, given in Schwarzschild coordinates by $\mathcal{H} := \{p \in M \;|\; r(p) = 2m\}$. Motivated by~\cite{Donnay_2019}, an incautious reader might conclude that there is a single Carrollian structure associated with $\mathcal{H}$. Indeed, by setting $\bar{g} := \iota^* g$, the bilinear form is canonically determined. However, such an association implicitly assumes a natural choice of $\bar{\ell}$. In this case, there is a unique (up to constants) Killing vector field that generates the event horizon (given in Schwarzschild coordinates by $\partial_t$) which plays this role. Then the canonical identification of $\bar{\ell} := \partial_t|_{\mathcal{H}}$ yields the desired association  (see Section~\ref{sec:examples} for more details). It is important to note, however, that this is a choice---one may have chosen to identify $\bar{\ell}$ with any vector field $f(t,\theta,\phi) \partial_t|_{\mathcal{H}}$, and the resulting Carrollian structures would be distinct.

Indeed, given a vector field $\ell \in \Gamma(TM)$ that restricts to the null vector field $\bar{\ell} \in \Gamma(T \mathcal{H})$, one can consider an arbitrary null frame $(n,\ell,m_i)$ such that $n \cdot \ell = 1$ and $i \in \{3,\ldots, d\}$. Null boosts given by
\begin{align*}
n \mapsto& A^{-1} n \\
\ell \mapsto& A \ell
\end{align*}
then will, in general, correspond to \textit{distinct} Carrollian structures, related by
$$
(\mathcal{H},\bar{g},\bar{\ell}) \mapsto (\mathcal{H},\bar{g},A|_{\mathcal{H}} \bar{\ell})\,.
$$
It follows that the space of null hypersurface embeddings is much larger than the space of Carrollian structures.
Since the space of pre-Carrollian structures is so large, it is challenging to invariantly classify. (Note however that NHSs with an enlarged structure have been examined using Cartan's moving frame approach, see~\cite{nurowski2000}.) In most cases, sufficient data can be provided by the embedding geometry to canonically construct more restrictive Carrollian geometries. As such, this article will consider (briefly) the case of Carrollian structures and Carrollian manifolds in Section~\ref{sec:Cstructures}, and then we will consider in more depth the case of special Carrollian manifolds in Section~\ref{sec:induced-cmanifolds}, which naturally arise from the induced geometry of embedded null hypersurfaces such as non-extremal weakly isolated horizons \cite{ashtekar2002}.







\section{\label{sec:Cstructures} Intrinsic Carrollian Geometries \protect}

As we are interested in utilizing Carrollian geometries to characterize null hypersurfaces embedded in spacetimes, it will be useful to develop a frame formalism viewing the hypersurfaces as manifolds in their own right. The formalism on Lorentzian manifolds has led to the development of the Cartan-Karlhede algorithm which permits the local characterization of any Lorentzian manifold \cite{karlhede1980review}. This permits the classification of solutions of a given gravity theory using invariants, and can give insight into the physical properties of a solution. For example, it is conjectured that curvature invariants are able to detect the appropriate bounding hypersurface for any black hole solution as the zero-set of some invariant. This is encapsulated in a series of conjectures known as the geometric horizon conjectures~\cite{coley2017geometric}. In this section we will outline the geometric freedom or ambiguity in defining a frame formalism for Carrollian structures and manifolds.

 In a Carrollian structure $(\mathcal{H},\bar{g},\bar{\ell})$, we have very little structure to employ for classification. As mentioned in the previous section, the structure group on a Carrollian manifold is $O(n-1) \ltimes \mathbb{R}^{n-1}$. To see this, consider some basis of the tangent space, $\{\bar{\ell}, \tilde{m}_i\}$  with $\bar{g}(\tilde{m}_i, \tilde{m}_i)<0$. Then we may employ the Gram-Schmidt algorithm to find a new frame, $\{\bar{\ell}, \bar{m}_i\}$ which yields the following inner-products using the degenerate metric:
\beq g(\bar{m}_i, \bar{m}_j) = - \delta_{ij}. \eeq

\noindent This condition is not just invariant under the action of $O(n-1)$---there is an additional frame transformation that will pass between diagonal metrics. If the Gram-Schmidt algorithm is applied to the following basis $\{\bar{\ell}, \tilde{m}_i + \tilde{c}_i \bar{\ell} \}$, then the output of the algorithm will be a new frame, $\{ \bar{\ell}, \bar{m}_i + c_i \bar{\ell}\}$ where the inner product is again $\bar{g}(\bar{m}_i + c_i \bar{\ell}, \bar{m}_j + c_j \bar{\ell}) = - \delta_{ij}$. This is the additional factor of $\mathbb{R}^{n-1}$ present in the structure group.

In principle, a weak classification of Carrollian structures is possible by considering the sequence of Lie derivatives of the metric $\bar{g}$. For example, we can consider $\mathcal{L}_{\bar{\ell}} \bar{g} =: 2 K$. This tensor is in some sense horizontal as it is unaffected by shifts of the initial diagonal frame by $\bar{\ell}$-terms and its components are only affected by the group $SO(n-1)$. Using the lifted frame, $\hat{m}_i = R_i^{~j} \bar{m}_j$ where $R$ is an arbitrary element in $SO(n-1)$ we can consider directions that maximize $K(\hat{m}_i, \hat{m}_j)$ to pick out geometrically preferred spatial directions modulo $\bar{\ell}$-terms. Additional invariants could be determined by considering subsequent Lie derivatives of $K$ with respect to $\bar{\ell}$ and Lie derivatives of the maximizing spatial directions shifted by arbitrary $\bar{\ell}$-terms. It is plausible that the  degeneracy in the Gram-Schmidt procedure could also be fixed in some way. In fact, Figuera-O'Farrill already began this classification scheme for Carrollian structures, however he only considered classification up to first order~\cite{figueroa2020}. In particular, he found that Carrollian structures may be classified into four families:
\begin{enumerate}
    \item $K = 0$;
    \item $K \propto \bar{g}$;
    \item $\operatorname{tr} K = 0$;
    \item none of the above.
\end{enumerate}
Note that viewed as a $(n-1) \times (n-1)$ matrix, $\bar{g}$ is invertible, so we define $\operatorname{tr} K := \operatorname{Trace}(\bar{g}^{-1}K)$.

\medskip

Note that the additional factor of $\mathbb{R}^{n-1}$ in the structure group can also explain our inability to uniquely construct a complete coframe in the absence of some choice of the covector dual to $\bar{\ell}$.  Instead, we can build a partial coframe  using the metric, $\bar{\mu}^i = \bar{g}(\bar{m}_i,\cdot)$. This mapping from frames to the coframe is many-to-one since $\bar{\ell}$ belongs to the radical of $\bar{g}$. Weak Carrollian structures appear in Carrollian field theories in the hydrodynamic regime \cite{armas2023carrollian} where the choice of a dual to the fundamental vector field is imposed arbitrarily. Similarly, a null hypersurface embedded in a spacetime can be seen as a weak Carrollian structure if a rigging vector is not specified \cite{mars1993}.

However, given such a dual covector dual to $\bar{\ell}$, i.e. $\bar{\nu}(\bar{\ell}) = 1$, we are able to construct an associated coframe for $T^* \mathcal{H}$. It is precisely such an object that we call an \textit{Ehresmann connection}, as it defines a horizontal subbundle $H \mathcal{H}$. Furthermore, this reduces the structure group simply to $O(n-1)$ by demanding that $\bar{\nu}(\bar{m}_i) = 0$, eliminating the degeneracy in the Gram-Schmidt procedure noted above. This reduced structure group can then be used for further classification. Indeed, a wealth of invariants may be generated using Cartan's moving frame approach~\cite{olver1995}. However, we leave this potential avenue of classification for another time and do not consider it further in this article.

\medskip

We now return to the context of a Carrollian structure arising from a null hypersurface embedding in a Lorentzian spacetime. In that case, the fundamental vector field $\bar{\ell}$ is viewed as arising from a vector field $\ell \in \Gamma(T M)$. A null frame and coframe pair can then be constructed (non-uniquely) for $TM|_{\mathcal{H}}$ and $T^* M|_{\mathcal{H}}$ respectively, so that they are adapted to $\ell$. Indeed, we may use a null coframe $\{ \lambda, \nu, \mu^i\}$, where $\lambda := g(\ell,\cdot)$, so that the metric takes the form
\beq \begin{aligned}
    g = 2 \lambda \nu - \delta_{ij} \mu^i \mu^j.
\end{aligned} \eeq 
\noindent Having fixed $\lambda$, we have thus excluded both null rotations about $\nu$ and  boosts in the $(\lambda,\nu)$ plane. Due to this, the structure group  acting on the coframe adapted to a  vector field $\ell$ that is null on a hypersurface consists of:

\begin{itemize}
    \item spatial rotations: $\tilde{\mu}^i =  R^i_{~j} \mu^j$, $R^i_{~j} \in SO(n-1)$; 
    \item null rotations about $\lambda$: $\tilde{\nu} = \nu + c_i \mu^i + |c|^2 \lambda$ and $\tilde{\mu}^i = \mu^i + c^i \lambda$ for $n-1$ real-valued functions, $c_i$, where $c_i = c^i$.
\end{itemize}
\noindent These frame transformations appear in the study of degenerate Kundt spacetimes \cite{coley2009kundt} and their application to gravity theories such as quadratic gravity \cite{pravda2017exact,hervik2017universal}. Furthermore in the context of the geometric horizon conjectures, the above frame transformations are used to construct the necessary curvature invariant that detects the horizon \cite{coley2017identification}.









Carrollian structures can appear in the study of null hypersurfaces when there is no preferred spatial slice and hence a specific transverse vector field that corresponds to an Ehresmann connection. Examples of this in the literature appear for non-expanding horizons for a specific choice of $\ell$ and for an extremal weakly isolated horizon where a preferred $\ell$ direction is determined but no spatial slice arises from the associated structures on the horizon~\cite{ashtekar2002}. 

In order to develop a canonical Koszul connection for a generic null hypersurfaces, a choice must be made of either a preferred spatial slice or equivalently the transverse direction. This is exemplified in the construction of Gaussian null coordinates for an open neighbourhood of a null hypersurface \cite{moncrief1983, friedrich1999, booth2013}. In the construction of these coordinates, one must first choose coordinates for a spatial $(n-1)$-dimensional submanifold in the null hypersurface along with a vector-field in the direction of the null generator of the hypersurface. Once these choices are made there is a unique choice of a transverse null vector-field pointing off of the hypersurface whose corresponding one-form (via the musical isomorphism) acts as the Ehresmann connection on the hypersurface. The relationship between the spatial slice, the Ehresmann connection, and Gaussian null coordinates will be revisited in Theorem \ref{triad} of Section \ref{sec:induced-cmanifolds}.



\medskip

Using any Ehresmann connection, it is possible to canonically construct a Carrollian connection~\cite{bekaertmorand2018}, providing a Carrollian manifold geometry on our Carrollian structure. Consequently, it is then possible to consider the derivative of vector-fields in the manifold and compute the torsion tensor, curvature tensor and their covariant derivatives. (Note, however, that the torsion tensor depends only tensorially on the choice of Ehresmann connection.) Compared to bare Carrollian structures, Carrollian manifolds have very rigid structure, allowing for a larger set of geometric invariants.

However, there is a distinguished family of Ehresmann connections that are more useful: the \textit{principal Ehresmann connections}. These Ehresmann connections respect the Carrollian structure in that $\mathcal{L}_{\bar{\ell}} \bar{\nu} = 0$, even if these Ehresmann connections are not characteristic tensor fields of the G-structure. In the next section, we consider a specific family of such Carrollian manifolds which naturally arise from null hypersurface embeddings (which can also be shown to arise from a principal Ehresmann connection).




\section{\label{sec:induced-cmanifolds} Induced Carrollian manifolds}

As mentioned in the introduction, we must induce sufficient structure on a null hypersurface in order to describe its intrinsic and extrinsic invariants in a natural way, which can be used to characterize the aforementioned black hole solutions. Fortunately, this additional structure is available for many black hole solutions of interest, and so does not significantly restrict the applicability of this formalism. To that end, in this section, we do precisely that by examining a canonical Koszul connection and then extracting differential invariants.

As discussed in the previous section, general Carrollian structures do not pick out a unique Ehresmann connection. However, when a Carrollian structure can be induced from a null hypersurface embedding $\iota : \mathcal{H} \hookrightarrow (M,g)$ and a distinguished null vector field $\ell|_{\mathcal{H}}$ along it, the family of such Carrollian structures is greatly reduced. This is because every null hypersurface can be (locally) foliated by spatial slices, and thus there exists (at least) one integrable 1-form $\nu$ such that $\iota^* \nu \neq 0$. However, this 1-form cannot annihilate $\bar{\ell}$ anywhere, because then the kernel of $\iota^* \nu$ would have non-constant dimension. So we may demand that $\bar{\nu} = [\iota^* \nu(\bar{\ell})]^{-1} \iota^* \nu$, i.e. $\bar{\nu}(\bar{\ell}) = 1$. A simple calculation shows $\bar{\nu}$ is integrable. So, a Carrollian structure arising as an embedded null hypersurface admits an integrable Ehresmann connection.
In fact, the following results pin down these connections even further:
\begin{proposition} \label{int-closed}
    Let $(\mathcal{H},\bar{g},\bar{\ell})$ be a Carrollian structure which admits an integrable Ehresmann $\tilde{\nu}$ connection, and let $S \subset \mathcal{H}$ be a hypersurface such that for every $v \in TS$, $\tilde{\nu}(v) = 0$. Then, there exists a unique closed Ehresmann connection $\bar{\nu}$ such that $\bar{\nu}(v) = 0$.
    \end{proposition}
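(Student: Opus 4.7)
The plan is to exploit the transversality of $\bar\ell$ to $S$ in order to set up flow-box coordinates in a tubular neighborhood of $S$. Since $\tilde{\nu}(\bar\ell)=1$ but $\tilde{\nu}(v)=0$ for every $v\in TS$, the vector $\bar\ell$ cannot lie in $TS$ at any point of $S$, so (as $S$ is a hypersurface) $T_p\mathcal{H} = T_p S \oplus \operatorname{span}(\bar\ell_p)$ for $p\in S$. Picking coordinates $(x^A)$ on $S$ and extending them by the $\bar\ell$-flow produces coordinates $(t,x^A)$ on a neighborhood of $S$ with $\bar\ell = \partial_t$ and $S = \{t=0\}$.

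Next, I would observe that in these coordinates the Ehresmann condition $\bar\nu(\bar\ell)=1$ pins the coefficient of $dt$ to unity, so any candidate takes the form $\bar\nu = dt + \bar\nu_A(t,x)\,dx^A$. The closedness condition $d\bar\nu = 0$ then decomposes into $\partial_t \bar\nu_A = 0$ (from the $dt\wedge dx^A$ components) and $\partial_{[A}\bar\nu_{B]} = 0$. The first of these forces $\bar\nu_A$ to be independent of $t$, so it is determined by its restriction to $S$. The annihilation requirement $\bar\nu(v)=0$ for $v\in TS$ gives $\bar\nu_A|_{t=0} = 0$, hence $\bar\nu_A \equiv 0$ throughout the neighborhood. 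The unique candidate is $\bar\nu = dt$, and one verifies directly that it satisfies all three required properties, yielding both existence and uniqueness in one stroke.

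A coordinate-free repackaging of the uniqueness that I would include as a sanity check uses Cartan's magic formula: if $\bar\nu_1$ and $\bar\nu_2$ are two such closed Ehresmann connections, their difference $\eta$ is closed with $\iota_{\bar\ell}\eta = 0$, and hence $\mathcal{L}_{\bar\ell}\eta = d(\iota_{\bar\ell}\eta) + \iota_{\bar\ell}d\eta = 0$, i.e.\ $\eta$ is $\bar\ell$-flow invariant. Combined with $\eta(\bar\ell)=0$ and $\eta(v)=0$ for $v\in TS$, this forces $\eta$ to vanish as a covector at every point of $S$, and flow invariance then propagates the zero throughout the tubular neighborhood.

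I do not expect a serious obstacle. Integrability of $\tilde\nu$ enters only implicitly—via Frobenius it guarantees that a hypersurface $S$ of the stated kind exists—while the construction of $\bar\nu$ itself depends only on $\bar\ell$ and $S$. The mildest subtlety is that the mixed spatial piece $\partial_{[A}\bar\nu_{B]} = 0$ of closedness ends up contributing nothing new, because the $t$-independence combined with the boundary condition has already collapsed $\bar\nu_A$ to zero; and of course the statement is inherently local, living on the tubular neighborhood on which the $\bar\ell$-flow-box is defined, which is consistent with the local point of view adopted earlier in the paper.
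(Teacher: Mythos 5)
Your proof is correct and follows essentially the same route as the paper's: flow-box coordinates $(t,x^A)$ built by dragging coordinates on $S$ along the integral curves of $\bar{\ell}$, in which the distinguished connection is the exact form $\ext t$ (the paper's $\ext u$, obtained there by subtracting the spatial components of $\tilde{\nu}$). The only notable difference is that you actually derive uniqueness from closedness plus the vanishing condition on $S$ (and re-check it via Cartan's formula), whereas the paper asserts uniqueness in one sentence, so your treatment of that point is, if anything, more complete.
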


\begin{proof}
    Let $p \in S$, and let $V \subset S$ be a neighborhood around $p$. Then, let $\{x^i\}$ be any set of coordinates on $V$. For a sufficiently small neighborhood $U \subset \mathcal{H}$ around $p$ one can then uniquely extend these coordinates off $V$ to $U$ by keeping their values fixed along integral curves of $\bar{\ell}$. This coordinatizes $U$ via $(u,x^i)$ and yields $\bar{\ell} = \partial_u$.

    Now define $V_{t} := \{p \in U | u(p) = t\}$. By definition, $V_0 = V$, however it does not follow that for every $v \in TV_t$, we have that $\tilde{\nu}(v) = 0$. Given the canonical basis for $TV_t$ and $T^* V_t$, we can then define
    $$\bar{\nu} = \tilde{\nu} - \tilde{\nu}(\partial_{x^i}) \ext x^i\,.$$
    Since $\tilde{\nu}(\partial_{x^i})$ vanishes on $V$, we have that $\bar{\nu}|_V = \tilde{\nu}$, and so $\bar{\nu}$ is orthogonal to $TV_0$, as required. Furthermore, because $\ext x^i(\bar{\ell})=0$ by construction, we have that $\bar{\nu}(\bar{\ell}) = 1$, and so it is an Ehresmann connection. Finally, we must check that $\bar{\nu}$ is closed.

    As $\bar{\nu}$ locally generates hypersurfaces $V_t$, it is integrable and thus can be expressed as $\bar{\nu} = f \ext g$ for some functions $f,g \in C^\infty U$. Invoking the Ehresmann constraint that $\bar{\nu}(\bar{\ell}) = 1$ and that $\bar{\nu}(\partial_{x^i}) = 0$ by construction, we find that $f \ext g = \ext u$. Thus $\bar{\nu}$ is locally exact (on $U$) and hence closed on $\mathcal{H}$.

    Note that $\bar{\nu}$ is unique because any different choice of $\bar{\nu}$ would fail to preserve the Ehresmann condition or the condition that $\bar{\nu}$ preserves $S$.
\end{proof}

\noindent 
During preparation of this article, it was brought to our attention that this result is a special case of~\cite{manzano2023}.

As a result of the above, there in fact exists a canonical principal Ehresmann connection:
\begin{corollary} \label{int-princ}
    Let $(\mathcal{H},\bar{g},\bar{\ell})$ be a Carrollian structure which admits an integrable Ehresmann connection $\tilde{\nu}$ which is orthogonal to a hypersurface $S \subset \mathcal{H}$. Then the Carrollian structure admits a canonical principal Ehresmann connection $\bar{\nu}$ that is orthogonal to $S$.
\end{corollary}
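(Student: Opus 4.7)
The plan is to directly invoke the preceding Proposition~\ref{int-closed} and then observe that the closed Ehresmann connection it produces is automatically principal, which is essentially immediate from Cartan's magic formula.

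First I would apply Proposition~\ref{int-closed} to the data $(\mathcal{H},\bar{g},\bar{\ell},\tilde{\nu},S)$ to extract a unique closed Ehresmann connection $\bar{\nu}$ orthogonal to $S$. This step imports the nontrivial content: existence, uniqueness, and closedness of $\bar{\nu}$. All that remains is to verify that this $\bar{\nu}$ is in fact principal, i.e. that $\mathcal{L}_{\bar{\ell}} \bar{\nu} = 0$.

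For the principal condition, I would use Cartan's magic formula,
\beq
\mathcal{L}_{\bar{\ell}} \bar{\nu} = \iota_{\bar{\ell}} \ext \bar{\nu} + \ext (\iota_{\bar{\ell}} \bar{\nu}).
\eeq
The first term vanishes because Proposition~\ref{int-closed} tells us $\ext \bar{\nu} = 0$, and the second term vanishes because $\iota_{\bar{\ell}} \bar{\nu} = \bar{\nu}(\bar{\ell}) = 1$ is constant by the Ehresmann condition. Hence $\mathcal{L}_{\bar{\ell}} \bar{\nu} = 0$ and $\bar{\nu}$ is a principal Ehresmann connection orthogonal to $S$.

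Finally, canonicity is inherited directly from Proposition~\ref{int-closed}: the closed Ehresmann connection orthogonal to $S$ is unique, and since being principal is an additional property of this same $\bar{\nu}$ (rather than a further choice), it is canonically determined by the data $(\mathcal{H},\bar{g},\bar{\ell},\tilde{\nu},S)$. I do not anticipate any serious obstacle here; the entire argument is essentially a one-line consequence of the proposition combined with Cartan's formula, so the only thing to be careful about is that the Ehresmann normalization $\bar{\nu}(\bar{\ell})=1$ is indeed constant (not merely unit on a slice), which is guaranteed by the definition of an Ehresmann connection used throughout the paper.
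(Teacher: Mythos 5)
Your proposal is correct and follows essentially the same route as the paper: invoke Proposition~\ref{int-closed} to obtain the unique closed Ehresmann connection orthogonal to $S$, then apply Cartan's magic formula with $\ext\bar{\nu}=0$ and $i_{\bar{\ell}}\bar{\nu}=1$ to conclude $\mathcal{L}_{\bar{\ell}}\bar{\nu}=0$. Your added remark that canonicity is inherited from the uniqueness in the proposition is a sensible (if implicit in the paper) clarification, not a deviation.
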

\begin{proof}
    From the proposition above, if $\tilde{\nu}$ is an integrable Ehresmann connection on a Carrollian structure $(\mathcal{H},\bar{g},\bar{\ell})$, then it admits a closed Ehresmann connection $\bar{\nu}$ that is orthogonal to $S$. From Cartan's magic formula,
    $$\mathcal{L}_{\bar{\ell}} \bar{\nu} = i_{\bar{\ell}} \ext \bar{\nu} + \ext i_{\bar{\ell}} \bar{\nu} = 0\,,$$
    because $\ext \bar{\nu} = 0$ and $i_{\bar{\ell}} \bar{\nu} = 1$. Thus $\bar{\nu}$ is principal.    
\end{proof}

As a consequence of Proposition~\ref{int-closed} and Corollary~\ref{int-princ}, it is clear that those Carrollian structures arising from null hypersurface embeddings should be (and can always be) prescribed a \textit{principal} Ehresmann connection.

Recall from the introduction that the Carrollian geometries do not, in general, have a uniquely defined torsion-free connection \cite{figueroa2020,Campoleoni2022}. However, given a principal Ehresmann connection, Bekaert and Morand \cite{bekaertmorand2018} showed that one may construct a canonical Carrollian connection.

The construction is as follows. First observe that, up to spatial rotations, we have a canonical frame and coframe, given by $(\bar{\ell}, \bar{m}_i)$ and $(\bar{\nu}, \bar{\mu}^i)$, respectively, where $\bar{\mu}^i(\bar{\ell}) = 0$ and $\bar{\nu}(\bar{m}_i) = 0$. Furthermore, we have a projector to the horizontal vector (and form) bundles: namely, $q_a^b = \delta_a^b - \bar{\ell}^b \bar{\nu}_a$. By construction, we have that $q^b_{~a} \bar{\ell}^a = 0 = q^b_{~a} \bar{\nu}_b$. Furthermore, a partial inverse $\bar{g}^{ab}$ to the degenerate metric $\bar{g}$ can be constructed via the relations $\bar{g}^{ab} \bar{g}_{bc} = q^a_{~b}$. This partial inverse can be used to raise indices of horizontal 1-forms---that is, if $\omega_a \bar{\ell}^a = 0$, then we can write $\omega^a = \bar{g}^{ab} \omega_b$.
Using these objects, we can define the canonical Carrollian connection  $\bar{\nabla}$ by the coordinate expression
\begin{align}
\label{carroll-conn} 
\begin{split}\bar{\Gamma}^{\lambda}_{~\rho \sigma} = \bar{\ell}^\lambda \partial_{(\rho} \bar{\nu}_{\sigma)} + \tfrac{1}{2} \bar{g}^{\lambda \mu} [\partial_{\rho} \bar{g}_{\sigma \mu} + \partial_{\sigma} \bar{g}_{\rho \mu} - \partial_{\mu} \bar{g}_{\rho \sigma}] \\- \tfrac{1}{2} \bar{g}^{\lambda \mu} \bar{\nu}_{\sigma} \mathcal{L}_{\bar{\ell}} \bar{g}_{\rho \mu}\,.
\end{split}
\end{align}
One nice feature of this Carrollian connection is that, when $\bar{\nu}$ is closed, $\bar{\nu}$ is parallel, i.e. $\bar{\nabla} \bar{\nu} = 0$.

It is thus clear that when a preferred spatial slice of a null hypersurface is distinguished from the Lorentzian geometry, an induced Carrollian structure $(\mathcal{H},\bar{g},\bar{\ell})$ picks out a distinguished closed (and thus principal) Ehresmann connection $\bar{\nu}$ and hence a distinguished Carrollian manifold $(\mathcal{H},\bar{g},\bar{\ell},\bar{\nabla})$ with a connection that renders $\bar{\nu}$ parallel. We call such an induced Carrollian manifold specified by the tuple $(\mathcal{H},\bar{g},\bar{\ell},\bar{\nu},\bar{\nabla})$ a \textit{special Carrollian manifold}.

\begin{remark}
Observe that the family of Carrollian manifolds induced from the data described above is much smaller than the family of all Carrollian manifolds with the same Carrollian structure. Indeed, the Carrollian manifolds that arise in this way are specifically those for which there exists a covariantly constant dual to the fundamental vector field.
\end{remark}

\medskip

Having established the geometric structures required to canonically determine a special Carrollian manifold from a hypersurface, we now provide an equivalence of three different ways this data may manifest itself in the bulk geometry.

\begin{theorem} \label{triad}
Let $\iota: \mathcal{H}^{d-1} \hookrightarrow (M^d,g)$ be a null hypersurface embedding into a Lorentzian manifold. Let $p \in \mathcal{H}$ and let $U \subset M$ be a small neighborhood around $p$. Then the following are canonically equivalent:
\begin{itemize}
\item A defining function $r$ for $\mathcal{H} \cap U$ paired with a null vector $n \in \Gamma(T M)|_{\mathcal{H} \cap U}$ transverse to $\mathcal{H}$ such that the one-form $\bar{\nu} := \iota^* g(n,\cdot)$ is closed and $n(r)|_{\mathcal{H} \cap U} = 1$.
\item A null vector field $\bar{\ell} \in \Gamma(T (\mathcal{H} \cap U))$ paired with a spacelike codimension-1 submanifold $\mathcal{S} \subset \mathcal{H} \cap U$ containing $p$;
\item A Gaussian null coordinatization of $U$.
\end{itemize}
\end{theorem}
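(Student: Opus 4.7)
The plan is to establish canonical maps $(\text{iii}) \to (\text{i})$, $(\text{iii}) \to (\text{ii})$, $(\text{ii}) \to (\text{iii})$, and $(\text{i}) \to (\text{ii})$, and to verify that the composites are mutually inverse. The arrows out of (iii) are direct: in a Gaussian null chart $(u, r, x^A)$ on $U$ with $\mathcal{H} \cap U = \{r = 0\}$ and the standard form of the metric (so that $g_{rr} = 0$, $g_{ur} = 1$, $g_{rA} = 0$ on $U$), I set $n := \partial_r|_{\mathcal{H}}$, $\bar{\ell} := \partial_u|_{\mathcal{H}}$, and $\mathcal{S} := \{u = u(p),\, r = 0\}$. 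Reading off the metric shows that $n$ is null and transverse, $\bar{\nu} := \iota^* g(n, \cdot) = du|_{\mathcal{H}}$ is exact (hence closed), $n(r) = 1$, $\bar{\ell}$ is null and tangent to $\mathcal{H}$, and $\mathcal{S}$ is a spacelike codimension-$1$ submanifold through $p$.

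For $(\text{ii}) \to (\text{iii})$ I would reproduce the classical Gaussian null coordinate construction of \cite{moncrief1983, friedrich1999, booth2013}. Given $(\bar{\ell}, \mathcal{S})$, pick coordinates $\{x^A\}$ on $\mathcal{S}$ near $p$, extend them to a neighborhood in $\mathcal{H}$ by Lie transport along $\bar{\ell}$, and define $u$ by $u|_{\mathcal{S}} = 0$ together with $\bar{\ell}(u) = 1$. At each point of $\mathcal{H}$ there is a unique null vector $n$ transverse to $\mathcal{H}$ with $g(n, \bar{\ell}) = 1$ and $g(n, \partial_{x^A}) = 0$; extend $n$ off $\mathcal{H}$ as an affinely parameterized null geodesic vector field with affine parameter $r$, and propagate $u, x^A$ to $U$ by holding them constant along these geodesics. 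A standard calculation then shows the metric takes the GNC form on a sufficiently small neighborhood.

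The final arrow $(\text{i}) \to (\text{ii})$ is where the machinery developed earlier in this section enters. Given $(r, n)$, define $\bar{\ell}$ as the unique null vector in $T\mathcal{H}$ such that $\bar{\nu}(\bar{\ell}) = 1$; this exists and is unique because $n$ is null and transverse. Then $\bar{\nu}$ is an Ehresmann connection in the sense of Section~\ref{sec:Cstructures}, and it is closed by hypothesis, so Corollary~\ref{int-princ} (or the Poincar\'e lemma applied locally) supplies a canonical foliation of $\mathcal{H}$ by integral hypersurfaces of $\bar{\nu}$; take $\mathcal{S}$ to be the leaf through $p$. Its tangent space is $\ker \bar{\nu} \subset T\mathcal{H}$, which avoids the unique null direction spanned by $\bar{\ell}$ in $T\mathcal{H}$, and so $\mathcal{S}$ is spacelike. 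The main remaining work --- and the chief obstacle --- is to check that the composites $(\text{iii}) \to (\text{ii}) \to (\text{iii})$ and $(\text{i}) \to (\text{ii}) \to (\text{iii}) \to (\text{i})$ return the original data. The first is a routine check against the coordinate form of the metric; for the second, the $r$ coordinate produced by the GNC construction is the affine parameter along null geodesics with initial tangent $n$, and the normalization $n(r)|_{\mathcal{H}} = 1$ together with the uniqueness of affine geodesic extension forces it to coincide with the input $r$.
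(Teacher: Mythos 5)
Your three constructions are essentially the paper's: the paper also proves the theorem by cycling, with (iii)$\to$(i) given by $r:=v$, $n:=\partial_v|_{\mathcal H}$ and the observation that $g(n,\cdot)=\ext u$ is closed, (ii)$\to$(iii) delegated to the Gaussian-null-coordinate construction of Friedrich et al., and (i)$\to$(ii) obtained by producing $\bar\ell$ and then cutting out $\mathcal S$ as a leaf of $\ker\bar\nu$ via closedness and Frobenius. Your construction of $\bar\ell$ (the intrinsic null direction normalized by $\bar\nu(\bar\ell)=1$) differs superficially from the paper's ($\ell:=g^{-1}(\ext r,\cdot)|_{\mathcal H}$), but the normalization $n(r)|_{\mathcal H}=1$ makes the two coincide, so this step is fine; your spacelikeness argument for $\ker\bar\nu$ is the paper's as well. (Minor point: Corollary~\ref{int-princ} asserts principality, not the existence of integral leaves; the foliation comes from Frobenius or the local Poincar\'e lemma, which is how the paper argues.)

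The one genuine problem is your final claim that the composite (i)$\to$(ii)$\to$(iii)$\to$(i) returns the original defining function $r$ because ``the normalization $n(r)|_{\mathcal H}=1$ together with the uniqueness of affine geodesic extension forces it to coincide with the input $r$.'' This is false: the datum in (i) is a defining function on the bulk neighborhood $U$, constrained only by $r^{-1}(0)=\mathcal H\cap U$ and $n(r)|_{\mathcal H}=1$, whereas the $r$ produced by the Gaussian-null construction is the affine parameter along the null geodesics extending $n$. For example $r$ and $r+r^2$ carry the same data on $\mathcal H$ but cannot both be that affine parameter, so the round trip recovers at most the $1$-jet of $r$ along $\mathcal H$ (similarly, (ii)$\to$(iii) involves a choice of coordinates on $\mathcal S$, so (ii) and (iii) are not literally in bijection either). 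The paper does not attempt these inverse checks: ``canonically equivalent'' is established only in the sense that each item canonically determines the others, which your three arrows already provide; the extra inverse claim should either be dropped or weakened to equality of the induced data on $\mathcal H$ (the same $\bar\ell$, $\mathcal S$, $\bar\nu$, and $n|_{\mathcal H}$), which is what actually matters for the induced special Carrollian manifold.
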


\begin{proof}
We prove by cycling.
We begin by defining $\ell := g^{-1}(\ext r, \cdot)|_{U}$. Since  $\mathcal{H}$ is null, $\ext r|_{\mathcal{H} \cap U}$ is null, and hence so is $\ell|_{\mathcal{H} \cap U}$. It then follows that $\ext r(\ell)|_{\mathcal{H} \cap U} = 0$, and so there exists $\bar{\ell} \in \Gamma(T (\mathcal{H} \cap U))$ such that $\ell|_{\mathcal{H} \cap U} = \iota_* \bar{\ell}$. 
Now as a non-vanishing 1-form on $\mathcal{H} \cap U$, $\bar{\nu}$ has a $d-2$ dimensional kernel. Furthermore, because $\bar{\nu}(\bar{\ell}) = n(r)|_{\mathcal{H} \cap U} =1$, we have that $\bar{\ell} \not \in \ker \bar{\nu}$, and thus $\ker \bar{\nu}$ is spacelike. Now observe that because $\bar{\nu}$ is an integrable Ehresmann connection, from Proposition~\ref{int-closed}, it is closed. Now let $x,y \in \ker \bar{\nu}$. Then, because $\bar{\nu}$ is closed,
$$0 = \ext \bar{\nu}(x,y) = x(\bar{\nu}(y)) - y(\bar{\nu}(x)) - \bar{\nu}([x,y])\,,$$
so it follows that $[x,y] \in \ker \bar{\nu}$. Thus we have a spacelike integrable $d-2$ distribution, which by the Frobenius theorem yields the desired spacelike submanifold $\mathcal{S} \subset \mathcal{H} \cap U$. So we have reached the second bullet point.

The third bullet point follows from a direct higher-dimensional generalization of Friedrich, et. al.~\cite{friedrich1999}. Thus we obtain a set of coordinates on $U$ given by $(v,u,x^A)$ such that
$$\ext s^2 = 2 \ext u \ext v + A \ext u^2 + B_A \ext u \ext x^A + C_{AB} \ext x^A \ext x^B\,,$$
where $A|_{\mathcal{H}} = 0 = B_A|_{\mathcal{H}}$.

Finally, given a Gaussian null coordinates $(v,u,x^A)$ on $U$ as above, we make the canonical choices $r := v$ and $n := \partial_{v}|_{\mathcal{H}}$. Clearly, $n$ is transverse to $\mathcal{H}$ and null. Furthermore, we find explicitly that $g(n,\cdot) = \ext u$, a closed one form. Thus it follows that $\bar{\nu} := \iota^*g(n,\cdot)$ is closed. We have thus obtained the first bullet point, which completes the proof.
\end{proof}

This theorem shows that these three ways of specifying the additionally required data are equivalent, and they all provide the same closed (and hence principal) Ehresmann connection $\bar{\nu}$ as well as a fundamental vector $\bar{\ell}$ on the null hypersurface with induced (degenerate) metric $(\mathcal{H},\bar{g})$. This information is sufficient to fully specify a special Carrollian manifold $(\mathcal{H},\bar{g},\bar{\ell}, \bar{\nu},\bar{\nabla})$, and it is such induced structures that we will consider going forward. 

\medskip

Before we proceed, it is worthwhile to take stock of what has been accomplished here. Given a null hypersurface $\mathcal{H}$ embedded in $(M,g)$, we have canonically induced sufficient structure on $\mathcal{H}$ to regard it as a Carrollian manifold in a geometrically natural way. This is entirely analogous, if more complex, than the procedure for spacelike (or timelike) hypersurfaces. There, one can induce a Riemannian (or pseudo-Riemannian) metric, which in turn induces a Levi-Civita connection, giving the hypersurface a (pseudo-)Riemannian structure in its own right. Having done this, one can then begin to construct intrinsic and extrinsic differential invariants of the embedding, such as the second fundamental form.

Having suceeded at a similar task in the case of an embedded null hypersurface, we are now situated to study similarly intrinsic and extrinsic differential invariants of the embedding. In principle, these invariants can be used to further study null hypersurface embeddings via classification methods, such as one similar to the Cartan--Karlhede method.

\subsection{Intrinsic Differential Invariants}
We begin by considering the intrinsic differential invariants associated to a special Carrollian manifold $(\mathcal{H},\bar{g},\bar{\ell},\bar{\nu},\bar{\nabla})$. Given the connection in Equation~(\ref{carroll-conn}), the torsion tensor is given by
\begin{align}\label{torsion}
\bar{T}^a_{~bc} = 
-\bar{g}^{ad} \bar{\nu}_{[b} \mathcal{L}_{\bar{\ell}} \bar{g}_{c]d}\,.
\end{align}
Of particular use is the so-called \textit{intrinsic second fundamental form}, given by
$$K_{ab} = \tfrac{1}{2} \mathcal{L}_{\bar{\ell}} \bar{g}_{ab}\,.$$
As noted in Section~\ref{sec:Cstructures}, $K$ is horizontal.
Thus, in terms of this tensor, we have that $\bar{T}^a_{bc} = 2 \bar{\nu}_{[b} K^a_{c]}$.

Next, we can express the curvature tensor in abstract index notation via
$$\bar{R}_{ab}{}^c{}_{d} Z^{d} = ([\bar{\nabla}_a, \bar{\nabla}_b] + T^{e}_{ab} \bar{\nabla}_e) Z^c\,.$$
Rewriting this in terms of the intrinsic second fundamental form, we have that
$$\bar{R}_{ab}{}^c{}_{d} Z^{d} = ([\bar{\nabla}_a, \bar{\nabla}_b] + 2 \bar{\nu}_{[a} K^e_{b]} \bar{\nabla}_e) Z^c\,.$$

Now, in more general contexts with less constrained structure, one can extract invariants by applying invariant operations to those structures. For example, one might consider producing additional differential invariants by evaluating the exterior derivative of the Ehresmann connection (these are called the Carrollian torsion and acceleration in~\cite{Ciambelli2019}). However, by construction, for a special Carrollian manifold, the Ehresmann connection is closed, and thus no additional curvature invariants appear. As noted in~\cite{Ciambelli2019}, this provides additional confirmation that such a principal Ehresmann connection produces an integrable horizontal bundle. Note that this is a stronger condition than that imposed by~\cite{Ciambelli2019}, but it is nonetheless always the case for special Carrollian manifolds.

Since Lie derivatives with respect to $\bar{\ell}$ can be expressed in terms of $\bar{\nabla}$ and $\bar{T}$ (which in turn can be expressed in terms of $\bar{\nu}$ and $K$), it is now clear that any intrinsic differential invariant of the  special  Carrollian manifold can be expressed via a tensor built from the following set of ingredients:
\begin{align} \label{intrinsic-list}
\{ \bar{g},\bar{\ell},\bar{\nu},\bar{\nabla},\bar{R},K\}\,.
\end{align}

\subsection{Bundle Decomposition on $\mathcal{H}$}

In order to relate spacetime invariants to their hypersurface counterparts, it is essential that we provide a distinguished decomposition of the tangent and cotangent bundles at a point $p \in \mathcal{H} \subset M$. We will assume that a set of data specified by Theorem~\ref{triad} is given and that it determines the special Carrollian manifold $\{\mathcal{H},\bar{g},\bar{\ell},\bar{\nu},\bar{\nabla}\}$. Now, consider a frame of $T_p \mathcal{H}$ which is canonically decomposed into vertical and horizontal components: $\{\bar{\ell}, \bar{m}^i\}$. Here, $\{\bar{m}^i\}$ form any orthonormal frame for the horizontal subspace given by $H \mathcal{H} := \ker \bar{\nu} \subset T \mathcal{H}$. As $\bar{\nu}$ is an Ehresmann connection, it is dual to $\bar{\ell}$ and hence, together with $\bar{g}$, we have a coframe $\{\bar{\nu},\bar{\mu}_i\}$.

We would like to align a distinguished frame in the spacetime at the point $p$ with that given for $T_p \mathcal{H}$. At any such point, there is a unique null rigging vector~\cite{mars1993} $n \in T_p M$ that satisfies $g(n,\iota_* \bar{\ell}) = 1$ and $g(n,\iota_* \bar{m}^i) = 0$ that is transverse to $\mathcal{H}$. (Note that $\iota^* g(n,\cdot) = \bar{\nu}$.) This vector can be used to provide a distinguished decomposition
\begin{align} \label{partial-decomp} T_p M = \langle n_p \rangle \oplus T_p \mathcal{H}\,.
\end{align}
In fact, this decomposition induces a surjection $T : TM|_{\mathcal{H}} \rightarrow T \mathcal{H}$ so that for any $v \in TM|_{\mathcal{H}}$, we have that $T(v) \in T \mathcal{H}$. Note that $T$ is the left-inverse of the pushfoward by $\iota$, in the sense that $(T \circ \iota_*)(\bar{v}) = \bar{v}$, for any $\bar{v} \in T \mathcal{H}$. Further, for $v \perp n$ in the sense that under the decomposition given, $v$ has a vanishing coefficient for $n$, we have that $(\iota_* \circ T)(v) = v$, and thus $T$ is a partial right-inverse of $\iota_*$.

The decomposition given in Equation~(\ref{partial-decomp}) can be further refined by the pushfoward of the frame $\iota_* \{\bar{\ell},\bar{m}^i\} := \{\ell,m^i\}$. We thus achieve
$$T_p M = \langle n_p \rangle \oplus \langle \ell_p \rangle \oplus H_p \mathcal{H}\,.$$
This decomposition also agrees with the decomposition of $T \mathcal{H}$ because $T(n) = 0$, $T(\ell) = \bar{\ell}$, and $T(m^i) = \bar{m}^i$.

A similar distinguished decomposition can be constructed for the cotangent bundle at a point $p \in \mathcal{H}$. Defining $\lambda := g(\ell,\cdot)$, $\nu := g(n,\cdot)$, and $\mu_i := g(m^i,\cdot)$, we find a distinguished decomposition
$$T^*_p M = \langle \lambda_p \rangle \oplus \langle \nu_p \rangle \oplus H^*_p\mathcal{H}\,.$$
This decomposition, too, agrees with the decomposition on the special Carrollian manifold, because $\iota_*(\lambda) = 0$, $\iota_*(\nu) = \bar{\nu}$, and $\iota_*(\mu_i) = \bar{\mu}_i$.

With these decompositions in hand, we can now begin to both relate connections and construct invariants of the embedding.

\subsection{Hypersurface Calculus}
In principle, every intrinsic invariant of a special Carrollian manifold $\{\mathcal{H},\bar{g},\bar{\ell},\bar{\nu},\bar{\nabla}\}$ can be built from Set~(\ref{intrinsic-list}). However, as we are interested in studying those that arise from null hypersurface embeddings and data specified as in Theorem~\ref{triad}, it is imperative that we determine the relationship between these objects and the defining objects of the bulk given by $\{g, \nabla, R, \nu,\ell \}$. Clearly, $\iota^* g = \bar{g}$, and the relationships for $\nu$ and $\ell$ were given in the previous section.

We can now consider the first non-trivial calculation, which involves finding a spacetime expression for $K$. As $T$ is a partial right-inverse of $\iota_*$, we can express $T(\ell)$ as $\iota^* \ell$, where $\iota^*$ here is the inverse of the pushforward map. Thus, we can write
\begin{align*}
 K =& \tfrac{1}{2}\mathcal{L}_{\iota^* \ell} \iota^* g \\
=&  \tfrac{1}{2}(\ext i_{\iota^* \ell} + i_{\iota^* \ell} \ext ) \iota^* g \\
=&  \tfrac{1}{2}\iota^* (\ext i_{\ell} + i_{\ell} \ext) g \\
=&  \tfrac{1}{2}\iota^* \mathcal{L}_{\ell} g \\
=& \iota^* \odot\nabla \lambda\,.
\end{align*}
Having established a spacetime formula for the intrinsic second fundamental form, we next turn to establishing the relationship between the spacetime Levi-Civita connection and the induced Carrollian connection.

\subsection{Spacetime-Carrollian relationships: the connection}

Recall the distinguished decomposition of $T_{p \in \mathcal{H}} M$:
$$T_p M = \langle n_p \rangle \oplus \langle \ell_p \rangle \oplus H_p \mathcal{H}\,.$$
Restricting the Levi-Civita connection to $\mathcal{H}$, we can write
$$\nabla^\top : \iota_* \Gamma(T \mathcal{H}) \otimes \Gamma(TM)|_{\mathcal{H}} \rightarrow \Gamma(TM)|_{\mathcal{H}}\,.$$
In order to describe the spacetime invariants in terms of intrinsic invariants of $\mathcal{H}$ and extrinsic invariants of the embedding $\mathcal{H} \hookrightarrow M$, we must decompose this connection and examine the projection onto each component. Specifically, for $\bar{u},\bar{v},\bar{w} \in \Gamma(H\mathcal{H})$, we consider the following scalars:
\begin{equation} \label{raw-table} \begin{aligned} 
\lambda_a &\nabla^\top_{\ell} n^a \,, \quad &\nu_a &\nabla^\top_{\ell} n^a \,, \quad &&(\iota_* \bar w)_a \nabla^\top_{\ell} n^a\,,\\
\lambda_a &\nabla^\top_{\ell} \ell^a\,, \quad &\nu_a &\nabla^\top_{\ell} \ell^a \,, \quad &&(\iota_* \bar w)_a \nabla^\top_{\ell} \ell^a \,, \\
\lambda_a &\nabla^\top_{\ell} (\iota_* \bar v)^a \,, \quad &\nu_a &\nabla^\top_{\ell} (\iota_* \bar v)^a\,, \quad &&(\iota_* \bar w)_a \nabla^\top_{\ell} (\iota_* \bar v)^a\,, \\
\lambda_a &\nabla^\top_{\iota_* \bar u} n^a \,, \quad &\nu_a &\nabla^\top_{\iota_* \bar u} n^a \,, \quad &&(\iota_* \bar w)_a \nabla^\top_{\iota_* \bar u} n^a\,,\\
\lambda_a &\nabla^\top_{\iota_* \bar u} \ell^a\,, \quad &\nu_a &\nabla^\top_{\iota_* \bar u} \ell^a \,, \quad &&(\iota_* \bar w)_a \nabla^\top_{\iota_* \bar u} \ell^a \,, \\
\lambda_a &\nabla^\top_{\iota_* \bar u} (\iota_* \bar v)^a \,, &\quad \nu_a &\nabla^\top_{\iota_* \bar u} (\iota_* \bar v)^a\,, \quad &&(\iota_* \bar w)_a \nabla^\top_{\iota_* \bar u} (\iota_* \bar v)^a\,.
\end{aligned}
\end{equation}

\subsubsection{Components of $\nabla^\top$ in the $n$ direction}
We begin by computing, in terms of intrinsic and extrinsic objects, the derivatives in the left column of table in Equation~\ref{raw-table}. Observe that, because $\lambda(n) \eqSig 1$ everywhere on $\mathcal{H}$, we can write $\lambda_a \nabla^\top_{\ell} n^a = -\nu_{a} \nabla^\top_{\ell} \ell^a$. Restricted to the hypersurface $\mathcal{H}$, this is the projection of $\nabla^\top_{\ell} \ell^a$ to the $\ell$ direction. However, because $\ell|_{\mathcal{H}}$ generates a null hypersurface, we necessarily have that  $\nabla^\top_{\ell} \ell |_{\Sigma} \propto \ell$, with proportionality constant given by the surface gravity of the hypersurface, denoted by $\kappa$. And so, we have that
$$\lambda_a \nabla^\top_{\ell} n^a = -\kappa\,.$$

Next we consider $\lambda_a \nabla^\top_{\iota_* \bar u} n^a$. This derivative measures how the vector $n$ changes in affine length as it is moved along the horizontal submanifold. In the physics literature, this is called the H\'aj\'i\v{c}ek one-form~\cite{hajicek1973,hajicek1975,gourgoulhon2005} (or in the mathematics literature, the normal fundamental form~\cite[Volume 4]{spivak}). We write
$$\beta_a := q^c_{~a} \iota^*(\lambda_b \nabla n^b)_c \in \Gamma(H\mathcal{H})\,,$$
so that
$$\lambda_a \nabla^\top_{\iota_* \bar u} n^a = \bar{u}^a \beta_a\,.$$

We now consider terms involving $\lambda$ with no dependence on $n$---that is, we consider terms of the form $\lambda_a \nabla^\top_{\iota_* \bar{x}} (\iota_* \bar{y})^a$, where $\bar{x},\bar{y} \in T \mathcal{H}$, and $x,y$ are their pushforwards, respectively. As the Levi-Civita connection is linear both in $x$ and $y$, and we have that
$$\lambda_a \nabla^\top_x f y^a = \lambda_a \nabla^\top_{fx} y^a =  f \lambda_a \nabla^\top_x y^a$$
for all $f \in C^\infty M$, following Spivak~\cite[Volume 1, Section 4, Theorem 2]{spivak} and~\cite[Volume 3, Section 1, Theorem 5]{spivak}, we have that there exists a symmetric tensor field $s : \Gamma(T \mathcal{H}) \times \Gamma(T \mathcal{H}) \rightarrow \Gamma(\Span{n})$. Hence we can write
$$\lambda_c s_{~ab}^c \bar{x}^a \bar{y}^b  = -\lambda_a \nabla^\top_{\iota_* \bar{x}} (\iota_* \bar{y}^a)\,.$$
As $\lambda (\iota_* \bar{y}) = 0$, we can re-express this tensor as
$$\lambda_c s^c_{~ab} = \iota^* (\nabla_{(a} \lambda_{b)})\,.$$
It follows that
$$\lambda_c s^c_{~ab} = K_{ab}\,.$$
We have thus established that
$$\lambda_a \nabla^\top_{\iota_* \bar{x}} (\iota_* \bar{y})^a = -K_{ab} \bar{x}^a \bar{y}^a\,.$$
However, $K$ is horizontal, so the following four values can be filled into the left column of table in Equation~\ref{raw-table}:
\begin{align*}
\lambda_a \nabla^\top_{\ell} \ell^a &= 0 \\
\lambda_a \nabla^\top_{\ell} (\iota_* \bar{v})^a &= 0 \\
\lambda_a \nabla^\top_{\iota_* \bar{u}} \ell^a &= 0 \\
\lambda_a \nabla^\top_{\iota_* \bar{u}} (\iota_* \bar{v})^a &= -K_{ab} \bar{u}^a \bar{v}^b\,.
\end{align*}

\subsubsection{Components of $\nabla^\top$ in the $\ell$ direction}
We next compute the derivatives in the middle column of table in Equation~\ref{raw-table} in much the same way as in the previous subsection. As $g(n,n) \eqSig 0$, from the Leibniz property and metric compatibility, we have that
$$\nu_a \nabla_b^\top n^a = 0\,.$$

By definition of surface gravity, we have that
$$\nu_a \nabla^\top_{\ell} \ell^a = \kappa\,.$$

Next, consider  $\nu_a \nabla^\top_{\iota_* \bar u} \ell^a$. By applying the Leibniz rule to the definition of the  H\'aj\'i\v{c}ek one-form and the fact that $\lambda(n) \eqSig \nu(\ell) \eqSig 1$, we have that
\begin{align} \label{neg-haji}
\nu_a \nabla^\top_{\iota_* \bar u} \ell^a = -\bar{u}^a \beta_a\,.
\end{align}

Using this, we consider $\nu_a \nabla^\top_{\ell} (\iota_* \bar v)^a$. We evaluate below:
\begin{align*}
\nu_a \nabla^\top_{\ell} (\iota_* \bar v)^a &\eqSig \nu_a ([\iota_* \bar \ell, \iota_* \bar v]^a + \nabla_{\iota_* \bar v} \ell^a) \\
&\eqSig (\iota^* \nu)([\bar \ell, \bar v]) + \nu_a \nabla^\top_{\iota_* \bar v} \ell^a \\
&\eqSig \bar{\nu}([\bar \ell, \bar v]) - \bar{v}^a \beta_a \\
&\eqSig \bar{\nu}_a (\bar{\nabla}_{\bar \ell} \bar{v}^a - \bar{\nabla}_{\bar v} \bar{\ell}^a - \bar{T}^a_{bc} \bar{\ell}^b \bar{v}^c) - \bar{v}^a \beta_a \\
&\eqSig \bar{\nu}_a \bar{\nabla}_{\bar \ell} \bar{v}^a  - \bar{v}^a \beta_a \\
&\eqSig  - \bar{v}^a \beta_a\,.
\end{align*}
In the second equality, we used the standard result that $\iota_* [\bar{u},\bar{v}] = [\iota_* \bar{u}, \iota_* \bar{v}]$, and in the third equality, we used Equation~(\ref{neg-haji}). In the fourth equality, we used the standard expression for the Lie bracket in terms of a torsionful connection, and in the fifth equality we used that $\bar{\nabla} \bar{\ell} = 0$ and that $\bar{\nu}_a \bar{T}^a_{~bc} = 0$. In the sixth line we used the Leibniz rule, that $\bar{\nu}(\bar v) = 0$, and that $\bar{\nabla} \bar{\nu} = 0$.

We finish this subsection by examining $\nu_a \nabla^\top_{\iota_* \bar u} (\iota_* \bar v)^a$. Similar to the previous subsection, we observe that for horizontal vector fields $\bar{u},\bar{v} \in H\mathcal{H}$, we have that $\nu_a \nabla^\top_{\iota_* \bar u} f (\iota_* \bar v)^a = f \nu_a \nabla^\top_{\iota_* \bar u} (\iota_* \bar v)^a$ for $f \in C^\infty M$. So again following Spivak, there exists a symmetric tensor field $\II : \Gamma(H\mathcal{H}) \times \Gamma(H\mathcal{H}) \rightarrow \Gamma(\Span{\ell})$, labelled the \textit{extrinsic second fundamental form} (or sometimes just the second fundamental form, for short). So, we can write
\begin{align*}
 \II_{ab} \bar{u}^a \bar{v}^b :=& -\nu_a \nabla^\top_{\iota_* \bar{u}} (\iota_* \bar{v}^a) \\
\eqSig& (\iota_* \bar{v})^a (\iota_* \bar{u})^b \nabla_{(a} \nu_{b)} \\
\eqSig& \tfrac{1}{2} (\iota_* \bar{v})^a (\iota_* \bar{u})^b \mathcal{L}_n g_{ab} \\
\eqSig& \tfrac{1}{2} (\iota^* \mathcal{L}_n g_{ab})(\bar{u},\bar{v})\,.
\end{align*}
Thus, we have that
$$\II_{ab} = \tfrac{1}{2} \iota^* \mathcal{L}_n g_{ab}\,.$$
With this tensor in hand, we thus have that
$$\nu_a \nabla^\top_{\iota_* \bar u} (\iota_* \bar v)^a = -\II_{ab} \bar{u}^a \bar{v}^b\,.$$

\subsubsection{Components of $\nabla^\top$ in the horizontal direction}
We finish this series of computations by considering the last column in table in Equation~\ref{raw-table}. The simplest of these derivatives follows from the geodesicity of $\ell$, namely $(\iota_* \bar w)_a \nabla^\top_{\ell} \ell^a = 0$. Beyond this, two others are already easily computed using the Leibniz rule and methods already established:
$$(\iota_* \bar w)_a \nabla^\top_{\iota_* \bar u} n^a = \II_{ab} \bar{u}^a \bar{w}^b$$
and
$$(\iota_* \bar w)_a \nabla^\top_{\iota_* \bar u} \ell^a = K_{ab} \bar{u}^a \bar{w}^b\,.$$

Next consider $(\iota_* \bar{w})_a \nabla^\top_{\ell} n^a$. Since $\nabla^\top_{\ell} \nu(\iota_* \bar w) = 0$, we have that
$$(\iota_* \bar{w})_a \nabla^\top_{\ell} n^a = \bar{w}^a \beta_a\,. $$

Now consider terms of the form $(\iota_* \bar w)_a \nabla^\top_{\iota_* \bar u} (\iota_* \bar v)^a$. To do so, we generalize a similar result found in~\cite[Volume 3]{spivak}, and hence consider the sum:
\begin{widetext}
\begin{align*}
\bar{g}_{jk} \bar{\nabla}_{\bar u} \bar{v}^j \bar{w}^k + \bar{g}_{jk} \bar{\nabla}_{\bar v} \bar{u}^j \bar{w}^k -  \bar{g}_{jk} \bar{\nabla}_{\bar w} \bar{u}^j \bar{v}^k 
=& \bar{g}_{jk} \bar{w}^k (\bar{\nabla}_{\bar u} \bar{v}^j + \bar{\nabla}_{\bar v} \bar{u}^j)  +  \bar{g}_{jk} \bar{v}^k (\bar{\nabla}_{\bar u} \bar{w}^j - \bar{\nabla}_{\bar w} \bar{u}^j) +  \bar{g}_{jk} \bar{u}^k (\bar{\nabla}_{\bar v} \bar{w}^j - \bar{\nabla}_{\bar w} \bar{v}^j) \\
=& \bar{g}_{jk} \bar{w}^k (2 \bar{\nabla}_{\bar v} \bar{u}^j + \bar{T}^j_{~im} \bar{u}^i \bar{v}^m + [\bar{u},\bar{v}]^j)  +  \bar{g}_{jk} \bar{v}^k (\bar{T}^j_{~im} \bar{u}^i \bar{w}^m + [\bar{u},\bar{w}]^j) \\&+ \bar{g}_{jk} \bar{u}^k (\bar{T}^j_{~im} \bar{v}^i \bar{w}^m + [\bar{v},\bar{w}]^j)\,.
\end{align*}
\end{widetext}
Now let $u_p = \iota_* \bar{u}_p$, $v_p = \iota_* \bar{v}_p$, and $w_p = \iota_* \bar{w}_p$, so that $u,v,w \in TM|_{\mathcal{H}}$. As before, we can use $\iota^*$ as the inverse of the pushforward map here because these are horizontal vectors. (Note that for what follows, we do not need to know how $u,v,w$ extend into $M$ because we are only taking derivatives in directions tangent to $\mathcal{H}$.) Performing the same computation with the spacetime Levi-Civita connection, a similar computation follows except the torsion vanishes. However, we also have that
\begin{align*}
\bar{g}_{jk} \bar{\nabla}_{\bar u} \bar{v}^j \bar{w}^k =& \bar{u} \langle \bar{v},\bar{w} \rangle_{\bar{g}} \\
=& i_{\bar{u}} \ext \langle \bar{v}, \bar{w} \rangle_{\bar{g}} \\
=& i_{\iota^* u} \ext \langle \iota^* v, \iota^* w \rangle_{\iota^* g} \\
=& i_{\iota^* u} \ext \iota^* \langle v,w \rangle_g \\
=& \iota^* i_u \ext \langle v,w \rangle_g \\
=& g_{ab} \nabla^\top_u v^a w^b\,.
\end{align*}
We need to compute a similar result for the Lie brackets. However, note that because $\mathcal{H}$ is a hypersurface, we have that $[\bar{v},\bar{w}] =: \bar{x} \in T \mathcal{H}$. So in particular there exists $x \in TM|_{\mathcal{H}}$ such that $\iota^* x = \bar{x}$. Viewing $\iota$ as a diffeomorphism $\mathcal{H} \rightarrow M|_{\mathcal{H}}$, we have that
$$x = \iota_* \bar{x} = \iota_* [\bar{v},\bar{w}] = [\iota_* \bar{v}, \iota_* \bar{w}] = [v,w]\,.$$
We now can compute at $p \in \mathcal{H} \subset M$:
\begin{align*}
\langle \bar{u}, [\bar{v},\bar{w}] \rangle_{\bar g_p} =& \langle \iota^* u, \iota^* [v,w] \rangle_{\iota^* g_p} \\
=& (\iota^* g)_p(\iota^* u, \iota^*[v,w]) \\
=& g_p(u,[v,w]) \\
=& \langle u, [v,w] \rangle_{g_p} \,.
\end{align*}
Combining these three computations, we have that
\begin{align*}
g_{ab} w^b \nabla^\top_v u^a = \bar{g}_{ab} (& \bar{w}^b \bar{\nabla}_{\bar v} \bar{u}^a + \tfrac{1}{2} \bar{w}^b \bar{T}^a_{~cd} \bar{u}^c \bar{v}^d \\&+ \tfrac{1}{2} \bar{v}^b \bar{T}^a_{~cd} \bar{u}^c \bar{w}^d + \tfrac{1}{2} \bar{u}^b \bar{T}^a_{~cd} \bar{v}^c \bar{w}^d)\,.
\end{align*}

Expressing this relationship in terms of the intrinsic second fundamental form and noting that $\bar{\nu}(\bar{u}) = \bar{\nu}(\bar{v}) = \bar{\nu}(\bar{w}) = 0$, we have that
$$(\iota_* \bar w)_a  \nabla^\top_{\iota_* \bar u} (\iota_* \bar v)^a = \bar{w}_a \bar{\nabla}_{\bar u} \bar{v}^a\,.$$

We conclude this subsection by considering terms of the form $(\iota_* \bar w)_a \nabla^\top_{\ell} (\iota_* \bar v)^a$. To do so, we will consider the intrinsic derivative $\bar{w}_a \bar{\nabla}_{\bar \ell} \bar{v}^a$. As $\bar{\nabla} \bar{\ell} = 0$ and $\bar{g}(\bar{\ell},\cdot) = 0$, we can express this derivative in terms of a Lie bracket and the torsion tensor:
\begin{align*}
\bar{w}_a \bar{\nabla}_{\bar{\ell}} \bar{v}^a =& \bar{w}_a [\bar{\ell},\bar{v}]^a +  \bar{w}_a \bar{\ell}^c \bar{v}^d \bar{T}^a_{~cd} \\
=& \bar{w}_a [\bar{\ell},\bar{v}]^a + 2 \bar{\nu}_{[c} K_{~d]}^a \bar{\ell}^c \bar{v}^d \\
=& \bar{w}_a [\bar{\ell},\bar{v}]^a + \bar{w}^a K_{ab} \bar{v}^b\,. 
\end{align*}
Now for the same reasons as before, we have that $\iota_* [\bar{\ell},\bar{v}] = [\ell,v]$, where $v = \iota_* \bar{v}$. As before, let $w = \iota_* \bar{w}$. Thus, we have that
$$\bar{w}_a [\bar{\ell},\bar{v}]^a = w_a [\ell,v]^a\,.$$
Now calculating in $M$ on $\mathcal{H}$, we have that
\begin{align*}
w_a [\ell,v]^a \eqSig& w_a \nabla^\top_{\ell} v^a - w^a v^b \nabla_b \lambda_a \\
\eqSig& w_a \nabla^\top_{\ell} v^a - w^a v^b \nabla_{(a} \lambda_{b)} - w^a v^b (\ext  \lambda)_{ab} \\
\eqSig& w_a \nabla^\top_{\ell} v^a - (\odot \nabla \lambda)(\iota_* \bar{v}, \iota_* \bar{w}) - (\ext \lambda)(\iota_* \bar{w},\iota_* \bar{v}) \\
\eqSig& w_a \nabla^\top_{\ell} v^a - (\iota^* \odot \nabla \lambda)(\bar{v},\bar{w}) - (\iota^* \ext \lambda)(\bar{w}, \bar{v}) \\
\eqSig& w_a \nabla^\top_{\ell} v^a - K(\bar{v},\bar{w}) \,.
\end{align*}
So, putting this all together, we have that
$$ (\iota_* \bar w)_a \nabla^\top_{\ell} (\iota_* \bar v)^a = \bar{w}_a \bar{\nabla}_{\bar{\ell}} \bar{v}^a\,.$$

\medskip

\subsubsection{Summary of decomposition} \label{subsection:decomposition}
For the purposes of clarity, we fill in table in Equation~\ref{raw-table}:
\begin{widetext}
\begin{equation} \label{full-table}
\begin{aligned}
\lambda_a &\nabla^\top_{\ell} n^a = -\kappa \,, \quad &\nu_a &\nabla^\top_{\ell} n^a = 0\,, \quad &&(\iota_* \bar w)_a \nabla^\top_{\ell} n^a = \bar{w}^a \beta_a\,,\\
\lambda_a &\nabla^\top_{\ell} \ell^a = 0\,, \quad &\nu_a &\nabla^\top_{\ell} \ell^a = \kappa\,, \quad &&(\iota_* \bar w)_a \nabla^\top_{\ell} \ell^a = 0 \,, \\
\lambda_a &\nabla^\top_{\ell} (\iota_* \bar v)^a = 0 \,, \quad &\nu_a &\nabla^\top_{\ell} (\iota_* \bar v)^a = -\bar{v}^a \beta_a \,, \quad &&(\iota_* \bar w)_a \nabla^\top_{\ell} (\iota_* \bar v)^a = \bar{w}_a \bar{\nabla}_{\bar \ell} \bar{v}^a \,, \\
\lambda_a &\nabla^\top_{\iota_* \bar u} n^a = \bar{u}^a \beta_a  \,, \quad &\nu_a &\nabla^\top_{\iota_* \bar u} n^a = 0 \,, \quad &&(\iota_* \bar w)_a \nabla^\top_{\iota_* \bar u} n^a =  \II_{ab} \bar{u}^a \bar{w}^b \,,\\
\lambda_a &\nabla^\top_{\iota_* \bar u} \ell^a = 0\,, \quad &\nu_a &\nabla^\top_{\iota_* \bar u} \ell^a = -\bar{u}^a \beta_a \,, \quad &&(\iota_* \bar w)_a \nabla^\top_{\iota_* \bar u} \ell^a = K_{ab} \bar{u}^a \bar{w}^b \,, \\
\lambda_a &\nabla^\top_{\iota_* \bar u} (\iota_* \bar v)^a = -K_{ab} \bar{u}^a \bar{v}^b \,, &\quad \nu_a &\nabla^\top_{\iota_* \bar u} (\iota_* \bar v)^a = - \II_{ab} \bar{u}^a \bar{v}^b \,, \quad &&(\iota_* \bar w)_a \nabla^\top_{\iota_* \bar u} (\iota_* \bar v)^a = \bar{w}_a \bar{\nabla}_{\bar u} \bar{v}^a\,.
\end{aligned}
\end{equation}
\end{widetext}
Observe that the difference between the spacetime connection along $\mathcal{H}$ and the induced connection on $\mathcal{H}$ is quantified by the extrinsic data specified by $\kappa$, $\beta$, and $\II$, as well as the intrinsic tensor $K$. A subset of these invariants were also noted in~\cite{HopfFrei}.

\subsection{Spacetime-Carrollian relationships: the curvature}
Having established relationships between $\nabla^\top$ and $\bar{\nabla}$, we now have the tools to rewrite the various projections of the spacetime Riemann curvature to $\mathcal{H}$ in terms of extrinsic data (such as $\kappa$, $\beta$, and $\II$) and intrinsic tensors (such as $\bar{R}$ and $K$).

We first consider the tangential component of the spacetime Riemann curvature. Since $T \mathcal{H}$ can be decomposed into $\langle \ell \rangle \oplus H\mathcal{H}$, we will need to consider projections into each component, resulting in (naively) 16 projections to compute. As before, we will assume that $\bar{t},\bar{u},\bar{v},\bar{w} \in \Gamma(H\mathcal{H})$ and their pushforwards are denoted, respectively, by $t,u,v,w \in \Gamma(TM)|_{\mathcal{H}}$.

For arbitrary vector fields $T,U,V,W \in \Gamma(TM)$, we have that
$$T^a U^b V^c W^d R_{abcd} = V_c (\nabla_T \nabla_U W^c - \nabla_U \nabla_T W^c - \nabla_{[T,U]} W^c)\,.$$
Using this identity and table in Equation~\ref{full-table}, we wish to rewrite various projections of spacetime curvatures in terms of the intrinsic curvatures $\bar{R}$ and $K$ and extrinsic curvatures $\beta$ and $\II$. The resulting family of identities is:
\begin{widetext}
\begin{align*}
t^a u^b v^c w^d R_{abcd} &\eqSig \bar t^a \bar u^b \bar v^c \bar w^d (\bar{R}_{abcd} + \II_{ad} K_{bc} - \II_{ac} K_{bd} - \II_{bd} K_{ac} + \II_{bc} K_{ad}) \\
t^a u^b \ell^c w^d R_{abcd} &\eqSig 2 \bar{t}^a \bar{u}^b \bar{w}^d (\bar{\nabla}_{[b} K_{a]d}+ \beta_{[b} K_{a]d}) \\
t^a u^b n^c w^d R_{abcd} &\eqSig 2 \bar{t}^a \bar{u}^b \bar{w}^d (\bar{\nabla}_{[b} \II_{a]d} - \beta_{[b} \II_{a]d}) \\
t^a u^b n^c \ell^d R_{abcd} &\eqSig 2 \bar{t}^a \bar{u}^b (K_{c[a} \II^c_{b]} - \bar{\nabla}_{[a} \beta_{b]}) \\
\ell^a u^b \ell^c w^d R_{abcd} &\eqSig -\bar{u}^b \bar{w}^d (\bar{\nabla}_{\bar{\ell}} K_{bd} + K^2_{bd} - \kappa K_{bd}) \\
\ell^a u^b n^c w^d R_{abcd} &\eqSig \bar{u}^b \bar{w}^d (\bar{\nabla}_b \beta_d - \bar{\nabla}_{\bar{\ell}} \II_{bd} - \beta_b \beta_d - K_{b}^a \II_{ad} -\kappa \II_{bd}) \\
\ell^a u^b n^c \ell^d R_{abcd} &\eqSig -\bar{u}^b (\bar{\nabla}_{\bar{\ell}} \beta_b + \bar{\nabla}_{b} \kappa + 2 \beta_a K_b^a)
\end{align*}
\end{widetext}
The remaining projections can be found by application of the first Bianchi identity. However, note that the first Bianchi is not the traditional statement that $R_{[abc]d} = 0$. Indeed, because the torsion of $\bar{\nabla}$ is non-vanishing, we have a more complicated first Bianchi identity:
$$\bar{R}_{[abc]d} - T_{[ab}{}^e T_{|e|c]d}  - \bar{\nabla}_{[a} T_{bc]d} =0 \,,$$
where $T_{abc} = \bar{g}_{cd} T^d_{ab}$. Given the formula for $T$ in terms of the Ehresmann connection and the intrinsic second fundamental form, we end up with
$$\bar{R}_{[abc]d} + 2 \bar{\nu}_{[a} \bar{\nabla}_{b} K_{c]d} = 0\,.$$
For example, we can use this identity to compute $\bar{\ell}^a \bar{R}_{abcd}$. Note that similar identities can be found for a different, torsion-free (but non-metric) Koszul connection~\cite{mars2013}.

\medskip

To summarize, in this section we have established a list of tensorial quantities that can be used to fully classify a special Carrollian manifold resulting from a null hypersurface embedding. As discussed in the introduction, one of the goals of this article was to construct a family of intrinsic and extrinsic invariants of such embeddings, so that we may better understand the geometric description of black hole thermodynamics. Furthermore, as these geometric invariants can be whittled down to their conformally-covariant pieces, we suspect they will be instrumental in understanding the thermodynamics of black holes that are conformally related to stationary black hole solutions. As noted in~\cite{blitz2}, one can always decompose conformally-covariant geometric quantities along a Riemannian hypersurface in terms of a finitely-generated family of intrinsic and extrinsic geometric objects. We expect that a similar situation will hold for null hypersurfaces, however the finitely-generated family of invariants will have more structure. As such, the invariants produced in this section merely make up these invariants up to first order in derivatives of the metric, and this family will require further development. We leave this as a task for future work.

\section{Examples} \label{sec:examples}

\subsection{Schwarzschild black hole}

As an illustration of our construction, we first consider the simplest (non-trivial) spacetime with an embedded null hypersurface: the Schwarzschild black hole with metric (in Schwarzschild coordinates)
$$ds^2 = (1-\tfrac{2m}{r}) dt^2 - (1-\tfrac{2m}{r})^{-1} dr^2 - r^2 d \Omega^2\,,$$
where $d \Omega^2$ is the metric on the round sphere. With this example, we show that our approach recovers previously known quantities \cite{friedrich1999}. The event horizon is given by the null hypersurface defined by $\mathcal{H} := \{p \in M \;|\; r(p) = 2m\}$.

The geometry distinguishes a fundamental vector field that generates $\mathcal{H}$. Indeed, there is a unique Killing vector field that generates the null hypersurface: $\partial_t$. We thus define $\bar{\ell} := \partial_t|_{\mathcal{H}}$.

As the spacetime is spherically symmetric, there is also a distinguished spatial slice of the hypersurface: the round sphere. So on $\mathcal{H}$, we can foliate by spatial slices
$$S_{t_0} = \{p \in \mathcal{H} \;|\; t(p) = t_0\}\,.$$

Now following the construction of Friedrich, et. al.~\cite{friedrich1999} with $\bar{\ell}$ and $S_{t_0}$, we find that in Gaussian null coordinates, the metric is given by
$$g = \left( 1 - \frac{2m}{r} \right)\ext u^2 + \ext u \ext r - r^2 \ext \Omega^2,$$ 
with canonical rigging vector $n = \partial_r$. In this coordinate system, $\ell = \partial_u$. We also obtain the following coframe
\begin{align*}
    \nu &= \ext u,\\
    \lambda &= \ext r - \left(1-\frac{2m}{r}\right) \ext u, \\
    m^1 &= r \ext \theta,\\
    m^2 &= r \sin\theta \ext\phi. 
\end{align*}

We can now directly compute the invariants of the induced special Carrollian manifold. First, observe that the degenerate metric $\bar{g}$ on $\mathcal{H}$ is given by
$$\bar{g} = -4m^2 \ext \Omega^2\,.$$
Since $\bar{g}$ is independent of $u$,  we have that $K=0$. Furthermore, because  $\bar{\nu}$ is constant under partial differentiation and the torsion vanishes, we find that $\bar{\Gamma}$ are precisely the Christoffel symbols of a round sphere with radius $2m$, and so $\bar{R}$ is that of the same.

In Gaussian null coordinates, we can explicitly compute the extrinsic invariants as well using the table in Equation~\eqref{full-table}:
\begin{align*}
    \kappa &= \frac{1}{4m}\,, \\
    \beta &= 0\,,\\
    \II &=  \frac{1}{4m} \bar{g}\,.
\end{align*}

\subsection{Non-spinning Thakurta metric}

The non-spinning Thakurta metric~\cite{thakurta1981} is a time-dependent conformal rescaling of the Schwarzschild metric:
$$\ext s^2 = e^{-2U(u)}\left( \left( 1 - \frac{2m}{r} \right)\ext u^2 + \ext u \ext r - r^2 \ext \Omega^2  \right),$$

Within the non-spinning Thakurta class of metrics, there are several interesting solutions to general relativity. For example, the Sultana-Dyer solution \cite{sultana2005cosmological} which describes an expanding black hole in an asymptotically Einstein-de Sitter universe is conformally related to the Schwarzschild solution and lies in the Thakurta class of metrics. Among other solutions contained in the non-spinning Thakurta class of metrics are the generalized McVittie solutions with a time dependent mass proportional to the scale factor \cite{maciel2015cosmological,guariento2012realistic} and the more recent black hole solutions or cosmological solutions \cite{mello2017}. Due to the simplicity of this metric we will examine the behaviour of the null hypersurface and compare it with the original Schwarzschild metric.

While the spacetime does not admit a null Killing field that generates the event horizon at $\mathcal{H} = \{p \in M \; | \; r(p) = 2m\}$, it does admit a conformal Killing field which generates the null hypersurface, $\partial_u$, and so we define $\bar{\ell} = \partial_u|_{\mathcal{H}}$. Just as in the Schwarzschild case, it also admits a distinguished spatial foliation of $\mathcal{H}$ parametrized by $u$ and given by the round spheres $S_u$.

In this case, Gaussian null coordinates are not necessary as the canonical rigging vector is chosen to be  $n = e^{2U} \partial_r$ which gives the following coframe 
\begin{align*}
    \nu &= \ext u, \\
    \lambda &= e^{-2U} \left(\ext r - \left(1 - \frac{2m}{r}\right) \ext u \right),\\ 
    m^1 &= e^U r \ext \theta,\\
    m^2 &= e^U r \sin(\theta) \ext \phi. 
\end{align*}

We now explicitly compute the invariants, both intrinsic and extrinsic. Here, the intrinsic second fundamental form is non-vanishing. In fact, this is expected: the conformal rescaling of the Schwarzschild metric, in a way that depends on the $u$ coordinate, implies that the induced metric  on a spatial slice will not preserve lengths when Lie dragged in the $\ell = \partial_u$ direction.

The lowest order intrinsic invariants are then

\begin{align*}
    K_{ab} &= -  U' \bar{g}, \\
    \bar{R}_{ijkl} &= \frac{e^{2U}}{4m^2} (\bar{g}_{ik} \bar{g}_{jl}-\bar{g}_{il} \bar{g}_{jk}). \\
\end{align*}

\noindent Using the expressions in Subsection \ref{subsection:decomposition} the remaining expressions for the curvature tensor can be computed in terms of these. 
 
The extrinsic curvatures can similarly be computed, yielding 
\begin{align*}
    \kappa &= \frac{8U'm^2 +m}{4m^2},\\
    \beta &= 0, \\
    \II &= \frac{e^{2U}}{4m} \bar{g}.
\end{align*}

Comparing the above invariants with that of the Schwarzschild black hole, we see that these null hypersurface embeddings are certainly distinct. We observe that certain quantities (such as the trace-free pieces of $K$ and $\II$, as well as the whole of $\beta$) are conformally covariant, as expected. It will be these quantities (and others, such as the intrinsic Weyl tensor) that will be essential for characterizing the intrinsic and extrinsic invariants of conformal Killing horizons, and can be used to relate the thermodynamics of conformal-to-stationary black holes and their stationary counterparts. 

In addition, from \cite{mello2017} it was noted that these solutions can potentially describe black hole solutions when the null hypersurface is a non-singular surface. This occurs when the conformal factor, $U(t)$ is bounded, and is reflected in the boundedness of the Carrollian invariants.

\subsection{An arbitrary $d-1$ dimensional null hypersurface}

Following from the examples of the horizon of the Schwarzschild solution and the conformal Killing horizon for the non-spinning Thakurta metric, we examine the general case of a null hypersurface in a $d$-dimensional spacetime to concretely illustrate how the Carrollian invariants appear as geometric invariants. 

In the construction outlined in this article, in order to study a given null hypersurace in a Lorentzian spacetime one must choose some spatial slice of the null hypersurface and build coordinates $\{x^A\}$ on this slice. In addition, a representative vector field,  $\bar{\ell}$, of the generating null direction of the null hypersurface must be chosen to be normal to this spatial slice. With these choices, a Gaussian null coordinate system can be constructed locally and the metric of the spacetime takes the form \cite{friedrich1999}:

\beq \ext s^2 = 2 \ext u \ext r + r A \ext u^2 + 2 r B_{A} \ext u \ext x^{A} + C_{AB} \ext x^{\mu}  \ext x^{\nu}\,, \eeq

\noindent where $A, B$ and $C$ are smooth functions of $u,r,x^{A}$ such that $C_{AB}$ is a negative definite $(d-2) \times (d-2)$ matrix. In this coordinate system, the hypersurface, $\mathcal{H}$ is located at $r=0$.

This construction picks out a single null vector-field transverse to the hypersurface, $n = \partial_r$ along with its dual $\nu = g(n,-) = \ext u$ as the principal Ehresmann connection for the Carrollian manifold. In the spacetime, we can construct the coframe:

\beq \begin{aligned} \nu & = \ext u, \\
 \lambda & = \ext r + rA \ext u + rB_{C} \ext x^{C}, \\
\mu^i & = m^i_{~C} \ext x^{C}. \end{aligned} \eeq

\noindent with the dual frame basis, 

\beq \begin{aligned} \ell &= \partial_u - rA \partial_r, \\
n &= \partial_r, \\
m_i &= m_i^{~C} (\partial_{x^C} - r B_{C} \partial_r). \end{aligned} \eeq

\noindent where $m^i_{~C}$ is an invertible matrix with inverse $m_i^{~C}$, satisfying,  $\delta_{ij} m^i_{~A} m^j_{~B} = C_{AB}$. On the Carrollian geometry, the degenerate metric $\bar{g} $ on $\mathcal{H}$ is 

\beq \bar{g} = C_{AB} dx^A dx^B. \nonumber \eeq

To investigate the intrinsic and extrinsic invariants of the induced special Carrollian manifold, we will compute the connection coefficients in the bulk and restrict to the hypersurface. The lowest order intrinsic invariants are then

\beq \begin{aligned}  K_{ij} &= -n(m_{(i|C|}) m_{j)}^{~C}|_{r=0},\\
\bar{R}_{ijkl} &=  2m_{[l}(\Gamma_{|ij|k]})|_{r=0}, \\
\end{aligned} \eeq

\noindent where $$\Gamma_{ijk} = \frac12 (D_{ijk} - D_{jik} - D_{kji}),$$ and $$ D^i_{~jk} =[ m_{[k}(m^i_{~|C|}) m_{j]}^{~C} ].$$ 

\noindent All other components of the curvature tensor and torsion tensor can be reconstructed from these quantities using the expressions in subsection \ref{subsection:decomposition}.

\noindent The extrinsic curvatures are 
\beq \begin{aligned} \kappa &= \frac12 n(rA)_{r=0}, \\
\beta_i &= n (rB_{C}) m_i^{~C}|_{r=0}, \\
\II_{ij} &= \ell(m_{(i|C|}) m_{j)}^{~C} |_{r=0}\,. \end{aligned} \eeq 

While these invariants are quite general, they will be further constrained by asking that the null hypersurface has some physical significance for the spacetime, such as a Killing horizon or a conformal Killing horizon. To do this, it will be necessary to characterize and distinguish the Carrollian geometries associated with such physical null hypersurfaces.

\section{Classification of Carrollian geometries}

We have noted that, under a conformal transformation, a stationary black hole solution may not be mapped to a new black hole solution. For example, in the case of the Schwarzschild solution, a conformal transformation can lead to cosmological solutions \cite{mello2017}. One approach to determining when a solution describes a black hole solution is by locally characterizing, or classifying, the solution. As our methods provide a canonical list of geometric invariants for such a conformally-related spacetime, the invariants so-described should also be usable to determine whether such a spacetime is indeed a black hole solution.

The classification of Lorentzian manifolds can, in principle, be accomplished using the Cartan-Karlhede algorithm \cite{aaman1980}. In this algorithm, an invariantly defined frame is determined by specifying canonical forms of the curvature tensor and its covariant derivatives up to a finite order. The maximum order of covariant differentiation and the uniqueness of this invariant frame is explicitly determined by the algorithm. In addition, in the context of the geometric horizon conjectures, for all weakly isolated horizons, which includes Killing horizons, the classification of black hole solutions using this algorithm identifies a specific curvature invariant that characterizes the horizon \cite{coley2017identification}.

Returning to the problem of determining when a black hole solution containing a Killing horizon yields a new black hole solution under a conformal transformation, this could be achieved locally by characterizing the resulting null hypersurface. Here, we will establish the mathematical framework for such a characterization of null hypersurfaces, with the aim to identify null hypersurfaces acting as black hole horizons in future work.

While null hypersurfaces are embedded in a Lorentzian manifold, we have shown that the intrinsic geometry of a null hypersurface is not Lorentzian but instead is Carrollian and hence admit a torsion tensor in addition to a curvature tensor. Cartan-Karlhede algorithms have been developed for geometries admitting torsion such as teleparallel geometries and Riemann-Cartan geometries \cite{coley2020}. However, these geometries still rely on the Lorentz group as a structure group. This motivates the investigation of a Cartan-Karlhede algorithm for Carrollian geometries equipped with a canonical connection. 

In this section we outline an approach for computing all invariants to locally characeterize a Carrollian manifold using the torsion tensor, the curvature tensor and its covariant derivatives. This approach will rely on the existence of a principal Ehresmann connection in the Carrollian manifold, which is guaranteed in the case of a null hypersurface. Equipped with this preferred Ehsresmann connection, it is possible to determine the components of the connection relative to the frame $\{e_I\} = \{ \ell, m_i\}$ or the coframe $\{\theta^I\} = \{ \nu, \mu^i\}$ as 

\beq \Gamma^I_{~JK} = e^I_{~\nu} e_J^{~\mu} (\partial_\mu e_K^{~\nu} + e_K^{~\lambda} \Gamma^\nu_{~\mu \lambda }) \eeq

\noindent where $e_I = e_I^{~\mu} \partial_\mu$ and $\theta^I = e^J_{~\mu} \ext^\mu$. 

Relative to this frame, we can compute the torsion tensor for the Carrollian manifold:

\beq \bar{T}^i =  \bar{T}^i_{[1j]} \bar{m}_i \otimes \bar{\nu} \wedge \bar{\mu}^j. \eeq  

\noindent In principle, this tensor allows for the fixing of the Carrollian boost parameters. If this is not possible, the curvature tensor may be computed and the components, $\bar{R}_{1ijk}$ can be used to fix the Carrollian boost parameters instead. The remaining $SO(n-1)$ freedom can be used to fix a canonical form for the components $\bar{R}_{ijkl}$. Further fixing of the Carrollian group parameters can be accomplished by computing the respective covariant derivatives of the torsion tensor and the curvature tensor.  We note that it may be advantageous to use Carrollian boosts for which the resulting Ehresmann connection is no longer principal in order to achieve a canonical form for the torsion tensor or curvature tensor.

This suggests the following algorithm. 

\begin{enumerate}
    \item Set the order of differentiation, $q$ to zero.
    \item Compute the derivatives of the torsion tensor and curvature tensor up to order $q$
    \item Determine a canonical form for the $q$-th derivatives of the torsion tensor and curvature tensor. 
    \item Fix the frame parameters using these canonical forms and record the residual frame freedom, denoted as $H_q$.
    \item Find the number of functionally independent components, $t_q$ of the torsion tensor, curvature tensor and their derivatives in the canonical form.
    \item If $\operatorname{dim}(H_q)= \operatorname{dim}(H_{q-1})$ and $t_{q} = t_{q-1}$, set $p+1 =q$ and stop. Otherwise, increment $q$ by $1$ and go to step 2.
\end{enumerate}

\noindent Here, the integer $p$ denotes the highest order of differentiation where new geometric information is introduced.

Given two special Carrollian manifolds, in order to determine their equivalence it is sufficient to determine the canonical form of the torsion tensor, curvature tensors and their respective covariant derivatives for one special Carrolian manifold and attempt to impose the same canonical forms for the other. This is a necessary condition for equivalence, but not sufficient. A sufficient condition follows from comparing the components of their respective canonical forms and solving the resulting equations arising from equating each invariant in the first special Carrollian manifold with the corresponding invariant in the second special Carrollian manifold.

\medskip

\section{Conclusion}




In this paper, we have utilized Carrollian geometric structures to describe, in a canonical way, the intrinsic and extrinsic invariants of special Carrollian manifolds; these can (in many cases) be mapped onto conformal-to-stationary black hole solutions. Key to this construction was the naturality, both from a mathematical and physical perspective, of the induced connection on the Carrollian manifolds. We then we used this method to build conformal invariants that relate the Schwarzschild solution to non-spinning Thakurta metrics in a methodical manner. This straightforward application motivates the use of our approach to compute the conformal invariants for more general metrics and study their thermodynamical properties. 

Furthermore, as the developed theory introduces a complete first-order family of geometric tensorial invariants for these solutions, in the future we hope to be able to use these invariants to determine whether a spacetime truly does contain a black hole solution, rather than a cosmological solution. The question of when a dynamical black hole is conformal to a stationary solution is significant in the context of the geometric horizon conjectures \cite{coley2017geometric}. In comparison with  the standard curvature invariants used to detect the horizon are compared to the class of conformal invariants constructed in \cite{mcnutt2017scalar, mcnutt2017curvature}, the level sets constructed from our set of invariants may yield different hypersurfaces and this suggests several possible horizon candidates where it is not obvious which hypersurface will give the appropriate black hole boundary. 


From a purely mathematical perspective, as a result of developing this formalism, we have provided clarity on the types of Carrollian geometries that can be associated to null hypersurface embeddings. 
Indeed, outside of applications to physical black holes solutions, 
we expect that interesting and fruitful classification results of more general Carrollian geometries will follow naturally from the framework developed here. Beyond just the geometries investigated in this article, elsewhere there has also been some work on conformal Carrollian structures~\cite{Duval2014,Ciambelli2019,Herfray2020} which may dove-tail nicely with our constructions. In future work we hope to tie in our constructions with those structures as well.

\vspace{5 mm }
 \begin{acknowledgments}
SB was supported by a COST Action CaLISTA grant (reference number E-COST-GRANT-CA21109-1cd88727). Further support was given by the Czech
Science Foundation (GACR) grant GA22-00091S and the Operational Programme Research Development and Education Project No. CZ.02.01.01/00/22-010/0007541. SB would also like to acknowledge the UiT Department of Mathematics and Statistics for their gracious hospitality. DM is supported by the Norwegian Financial Mechanism 2014-2021 (project registration number 2019/34/H/ST1/00636).
 \end{acknowledgments}

\bibliographystyle{unsrt}
\bibliography{NHCSbib}


\end{document}